\documentclass{article}
\usepackage{PRIMEarxiv}

\usepackage[utf8]{inputenc} 
\usepackage[T1]{fontenc}    
\usepackage{url}            
\usepackage{booktabs}       
\usepackage{amsfonts}       
\usepackage{nicefrac}       
\usepackage{microtype}      
\usepackage{lipsum}
\usepackage{fancyhdr}       
\usepackage{float}
\usepackage{graphicx}       
\graphicspath{{media/}}     
\usepackage{algpseudocode}
\usepackage[ruled,vlined,linesnumbered]{algorithm2e} 

\usepackage{amsfonts}
\usepackage{amsmath}
\usepackage{amssymb}
\usepackage{amsthm}
\usepackage{boxedminipage}
\usepackage{caption} 
\usepackage{enumitem}
\usepackage{graphicx}
\usepackage{multirow}
\usepackage{natbib}  
\usepackage{pifont}
\usepackage{pgfplots}
\usepackage{comment}
\usepackage{xcolor}         
\usepackage{tikz}
\usepackage{thm-restate}
\setlist{nolistsep}
\usetikzlibrary{arrows}
\usetikzlibrary{patterns}

\usetikzlibrary{shapes.geometric, arrows.meta, positioning}

\newtheorem{definition}{Definition}%
\newtheorem{theorem}{Theorem}%
\newtheorem{proposition}{Proposition}%
\newtheorem{example}{Example}

\pgfplotsset{compat=1.18}

\allowdisplaybreaks

\usepackage{authblk}

\title{\bf Compatible $k$-Relaxations of Fairness and Non-Wastefulness Under Hereditary Constraints
}

\author[1]{Tenma Wakasugi}
\author[1]{Zhaohong Sun}
\author[1]{Kei Kimura}
\author[1]{Makoto Yokoo}

\affil[1]{Kyushu University}

\date{\vspace{-10mm}}

\begin{document}

\maketitle

\begin{abstract}
We study two-sided matching markets under hereditary constraints, which extend beyond simple capacity limits and arise in applications such as diversity requirements and refugee resettlement. In these settings, fairness and non-wastefulness are often incompatible, and existing approaches typically address this tension by prioritizing one property at the expense of the other.
We take a different approach by relaxing both properties simultaneously in a controlled and symmetric manner. We introduce two notions indexed by an integer $k$: envy-received up to $k$ peers (ER-$k$) and non-wastefulness up to $k$ objections (NW-$k$). Our main theoretical result shows that ER-$k$ and NW-$k$ are always compatible under hereditary constraints for any fixed $k$. We provide two equivalent polynomial-time algorithms to compute such matchings: a $k$-admissible cutoff algorithm and a $k$-admissible college-proposing deferred acceptance mechanism. Finally, experimental results demonstrate that even small relaxations achieve a favorable balance between fairness and non-wastefulness.
\end{abstract}

\section{Introduction}

Centralized algorithms for two-sided matching markets have achieved remarkable success in a wide range of applications, most notably in matching residents to hospitals \citep{Roth84a} and students to colleges \citep{AbSo03b}. Classical models, however, typically assume simple capacity constraints on each institution. In many real-world matching and allocation problems, feasibility requirements are substantially richer and cannot be adequately represented by capacities alone.

A prominent example arises in college admissions, where diversity considerations play a central role. Students are often associated with observable characteristics, and colleges reserve seats by type in order to promote balanced representation \citep{AyTu17a,BCC+19b,Aybo21a}. In refugee resettlement, authorities must assign refugee families to localities subject to limited resources across multiple dimensions, such as education, healthcare, housing, and employment, while ensuring that all members of a family are placed together \citep{DKT23a,ACGS18a}. Daycare assignment provides another illustration: legal regulations impose age-dependent teacher–child ratios and space requirements, giving rise to feasibility constraints that cannot be effectively captured by simple capacity limits \citep{KaKo23a,STM+23a}. More complex feasibility structures also arise when individual institutions face their own capacity constraints while collections of institutions are jointly subject to aggregate quotas, as in Hungarian college admissions \citep{BFIM10a} and hospital–doctor matching in Japan \citep{KaKo15a}. Related feasibility concerns appear in internship and research matching programs, where assigning students to projects requires supervisors to allocate limited funding or supervision capacity across multiple projects, as observed in Australian undergraduate research schemes \citep{ABB24a}.

The abovementioned forms of feasibility constraints all fall within a broad framework commonly referred to as hereditary constraints \citep{GKK+15a,KTY18a}. This framework provides a unifying and expressive way to model a wide range of practically important matching environments. Under hereditary constraints, any feasible matching remains feasible after the removal of any subset of matched students. In this paper, we focus on matching problems with hereditary constraints and investigate which desirable properties can be achieved under such general feasibility restrictions.

Given the generality of these constraints, the classical notion of stability from the seminal matching literature is no longer guaranteed to exist. Stability can be decomposed into two distinct components: \emph{fairness}, which requires that no agent has justified envy toward another agent with lower priority, and \emph{non-wastefulness}, which requires that no agent can claim an empty seat, i.e., no agent can be matched to a more preferred option without violating feasibility. Under general feasibility constraints, these two properties are often fundamentally in tension.

Existing approaches typically address this tension by prioritizing one property at the expense of the other, either abandoning one property entirely \citep{GKK+15a,KaKo23a} or preserving one while substantially weakening the other \citep{ABB24a}. Consequently, mechanisms that enforce fairness may lead to highly wasteful outcomes with many empty seats, while mechanisms that enforce non-wastefulness may produce outcomes exhibiting widespread justified envy. 
What is notably missing is an approach that treats fairness and non-wastefulness in a balanced manner.

In this paper, we take a different approach. Rather than insisting on fully preserving one property, we propose to relax both fairness and non-wastefulness simultaneously, in a controlled and symmetric manner. By weakening each notion just enough, we restore their compatibility and obtain outcomes that strike a meaningful balance: matchings that are substantially fair while avoiding excessive waste. This perspective enables us to navigate the fundamental trade-off between fairness and efficiency under general feasibility constraints, leading to new existence results and polynomial-time algorithms.

To operationalize this idea, we introduce two symmetric, parametric relaxations of fairness and non-wastefulness. Both notions are indexed by a nonnegative integer $k$, which measures the extent to which violations are permitted.
The first notion, \emph{envy-received up to $k$ peers} (ER-$k$), relaxes fairness by limiting how widespread envy can be. Specifically, no student may be envied by more than $k$ other students. When $k=0$, ER-$0$ coincides with classical fairness, while larger values of $k$ allow controlled violations of fairness, in the sense that any envy is confined to at most $k$ affected students.
\footnote{This concept was studied independently and contemporaneously by \citet{TKK25a} in a different setting, and their paper appeared shortly before the present submission. Although motivated by different considerations, the resulting notion is equivalent.}
The second notion, \emph{non-wastefulness up to $k$ objections} (NW-$k$), relaxes non-wastefulness by limiting when a student can legitimately claim an empty seat. Under NW-$k$, an empty seat may remain unfilled only if assigning a student to that seat would trigger objections from more than $k$ other students. Thus, larger values of $k$ make it harder to justify leaving seats unfilled, strengthening efficiency requirements by allowing empty seats only when filling them would substantially harm other students.

We summarize our contributions as follows.

First, we introduce new relaxations of fairness and non-wastefulness, both parameterized by an integer $k$: \emph{Envy-Received-Freeness up to $k$ Agents} (ER-$k$) and \emph{Non-Wastefulness up to $k$ Objections} (NW-$k$). By varying $k$, these notions subsume several existing concepts in the literature and provide a unified taxonomy of fairness and non-wastefulness. Figure~\ref{fig:relationship} illustrates the relationships and compatibility among these notions.

Second, we establish a general existence result for ER-$k$ and NW-$k$ outcomes, showing that the two notions are always compatible under hereditary constraints. This result generalizes the findings of \citet{ABB24a}, whose notions of fairness and cutoff non-wastefulness correspond to the special cases ER-$0$ and NW-$0$ in our framework. We further show that this result is tight: in general, ER-$k$ and NW-$k+1$, as well as ER-$k-1$ and NW-$k$, are incompatible. Note that ER-k becomes more stringent as k decreases, whereas NW-k becomes more stringent as k increases.

Third, we design two polynomial-time algorithms for computing ER-$k$ and NW-$k$ matchings: a \emph{$k$-admissible cutoff algorithm} and a \emph{$k$-admissible college-proposing deferred acceptance mechanism}. The former extends the original cutoff algorithm, which is restricted to ER-$0$, to handle arbitrary values of $k$. We further show that the two algorithms are equivalent, with the latter providing an alternative interpretation grounded in the classical deferred acceptance framework.

Finally, we conduct an experimental evaluation of the $k$-admissible cutoff algorithm to examine the effect of varying $k$ on the resulting matchings. Our results show that even for small values of $k$, the algorithm achieves a favorable balance between fairness and non-wastefulness.

\begin{figure}[tb]
\centering
\begin{tikzpicture}[node distance=2cm, every node/.style={align=center}]

\node (a) at (2, 3) {stability};
\node (b) at (0, 1.5) {fairness = ER-$0$};
\node (c) at (4, 1.5) {non-wastefulness = NW-$|S|$};
\node (d) at (-2, 0) {EF-$k$};
\node (e) at (0, 0) {ER-$k$};
\node (f) at (0,-1.5) {ER-$|S|$};
\node (g) at (4,0) {NW-$k$};
\node (h) at (4,-1.5) {CNW = NW-$0$};

\draw[->, thick] (a) -- (b); 
\draw[->, thick] (a) -- (c); 
\draw[->, thick] (b) -- (d); 
\draw[->, thick] (b) -- (e); 
\draw[->, thick] (d) -- (f); 
\draw[->, thick] (e) -- (f); 
\draw[->, thick] (c) -- (g);
\draw[->, thick] (g) -- (h);

\draw[dashed, double, thick, red] (f) -- (c);
\draw[dashed, double, thick, red] (e) -- (g);
\draw[dashed, double, thick, red] (b) to (h); 

\node at (7, 0.3) {$\longrightarrow$ A implies B};
\node at (7, -0.2) {\textcolor{red}{==} compatibility};

\end{tikzpicture}
\caption{Relationship and compatibility of different fairness and non-wastefulness concepts with $0 \leq k \leq |S|$ where $|S|$ denotes the number of students. When $k \ge |S|$, the notions ER-$k$ (respectively, NW-$k$) are equivalent to ER-$|S|-1$ (respectively, NW-$|S|-1$).}
\label{fig:relationship}
\end{figure}

\section{Related Work}
In this section, we briefly review the literature most closely related to our paper.

A number of studies investigate matching under hereditary constraints. \citet{GKK+15a} propose the Adaptive Deferred Acceptance mechanism, which guarantees feasibility and non-wastefulness but does so at the expense of fairness. In contrast, \citet{KaKo23a} study a student-optimal fair matching algorithm that focuses exclusively on fairness and does not impose any form of non-wastefulness. Similarly, \citet{ABB24a} treat fairness as a fundamental requirement and introduce a weakened notion of cutoff non-wastefulness that remains compatible with fairness. In our framework, cutoff non-wastefulness corresponds to NW-0, which is the weakest notion in the NW-k hierarchy, as illustrated in Figure~\ref{fig:relationship}. Finally, \citet{CKM+22a} show that no strategyproof mechanism can satisfy fairness together with a weaker efficiency property, termed weak non-wastefulness, under hereditary constraints.

More recent work explores weakened notions of fairness by bounding the extent of justified envy. For example, \citet{CKL+24a} propose envy-freeness up to $k$ peers (EF-$k$), which limits the number of agents that a given agent may justifiably envy. From a complementary perspective, \citet{TKK25a} study a related notion that bounds the number of agents who may envy a given agent by $k$. However, these relaxations of fairness remain incompatible with meaningful notions of non-wastefulness, and the corresponding relaxations of non-wastefulness considered in these works are very weak.

A related line of work studies matching under distributional constraints using tools from discrete convex analysis. In particular, \citet{KTY18a} model feasibility constraints via $M^\natural$-concave functions and show that, under this structure, a generalized deferred acceptance mechanism produces stable outcomes and runs in polynomial time. This approach emphasizes mechanism design and incentive properties under specific convexity assumptions. In contrast, our work studies matching under general hereditary constraints and addresses the incompatibility between fairness and non-wastefulness through symmetric parametric relaxations, without relying on discrete convexity.

\section{Preamble}

\subsection{Model}

We formulate our model within the framework of college choice, which can be readily extended to other matching environments. An instance is represented by a tuple $(S, C, X, \succ_S, \succ_C, f)$, where $S$ is the set of students, $C$ is the set of colleges, $X \subseteq S \times C$ is the set of available contracts, $\succ_S$ and $\succ_C$ denote the preference relations of students and colleges, respectively, and $f$ specifies a feasibility function.

A contract $(s,c) \in X$ specifies the assignment of student $s$ to college $c$. 
We use $(s,\emptyset)$ to indicate that $s$ remains unassigned and $(\emptyset,c)$ to indicate that $c$ leaves one seat vacant. 
For any $Y \subseteq X$, let $Y_s := \{(s,c) \in Y \mid c \in C\}$ and $Y_c := \{(s,c) \in Y \mid s \in S\}$ denote, respectively, the set of contracts in $Y$ involving student $s$ and the set of contracts in $Y$ involving college $c$.

Each student $s \in S$ has a strict preference relation $\succ_s$ over contracts $X_s$. A contract $(s,c)$ is said to be \emph{acceptable} to student $s$ if $(s,c)\succ_s (s,\emptyset)$. 
Similarly, each college $c \in C$ has a strict priority order $\succ_c$ over contracts $X_c$. A contract $(s,c)$ is \emph{acceptable} to college $c$ if $(s,c)\succ_c (\emptyset,c)$. 
Unless otherwise specified, we assume that every contract $(s,c) \in X$ is acceptable 
to both student $s$ and college $c$, since any unacceptable contract can be removed from $X$ 
without loss of generality.
When convenient, we abbreviate $(s,c)\succ_s (s,c')$ as $c \succ_s c'$, 
and $(s,c)\succ_c (s',c)$ as $s \succ_c s'$.

A set of contracts $Y \subseteq X$ is a \emph{matching} if $|Y_s| \le 1$ for every $s \in S$, i.e., each student is involved in at most one contract.

\subsection{Feasibility Function}
In this subsection, we introduce a feasibility function $f$ that captures constraints extending beyond simple capacity limits. Under this framework, the feasibility of a matching is determined solely by the \emph{number} of students assigned to each college, rather than on their specific \emph{identities}.

Formally, let $m$ denote the number of colleges, and let $\mathbb{Z}_+^m$ be the set of $m$-dimensional vectors with nonnegative entries. A \emph{load vector} $\nu \in \mathbb{Z}_+^m$ specifies, for each college, the number of students assigned to it; that is, $\nu_i$ represents the number of students assigned to college $c_i$.

\begin{definition}[Feasibility Function and Induced Feasible Vectors]
\label{def:feasibility}  
A \emph{feasibility function} is a mapping $f: \mathbb{Z}_+^m \to \{0, -\infty\}$ that indicates whether a load vector is feasible.  
For a load vector $\nu \in \mathbb{Z}_+^m$, define  
\[
f(\nu) =
\begin{cases}
0, & \text{if $\nu$ satisfies all distributional constraints},\\[2mm]
-\infty, & \text{otherwise.}
\end{cases}
\]

The \emph{induced set of feasible vectors} $F$ is the set of load vectors that are feasible according to $f$, given by
\[
F = \{\nu \in \mathbb{Z}_+^m \mid f(\nu) = 0\}.
\]
\end{definition}

A fundamental type of distributional constraint is the maximum quota, where each college has a capacity on the number of students it can accept. We use maximum quotas to illustrate the feasibility function $f$.

\begin{example}[Feasibility Function under Maximum Quotas] 
Suppose each college $c \in C$ has a maximum quota $q_c$.  
Let $\nu$ denote a load vector, where $\nu_i$ represents the number of students assigned to college $c_i$.  
The feasibility function $f$ is then defined as
\begin{equation}
f(\nu) =
\begin{cases}
0, & \text{if each college $c$ admits no more than $q_{c}$ students},
\\
-\infty, & \text{otherwise}.
\end{cases}
\end{equation}

The induced set of feasible vectors is
\[
F = \{\nu \in \mathbb{Z}_+^m \mid \nu_i \le q_{c_i} \text{ for all } i=1,\dots,m\}.
\]
\end{example}

\begin{definition}[Feasible Matching]
\label{def:feasible_matching}  
For a matching $Y \subseteq X$, let $\nu(Y) := (|Y_{c_1}|, \dots, |Y_{c_m}|)$ denote the vector of student counts at each college.  

Given a feasibility function $f$, we say that $Y$ is \emph{feasible} if $f(\nu(Y)) = 0.$
\end{definition}

\subsection{Hereditary Constraints}
In this paper, we focus on a class of feasibility constraints called \emph{hereditary constraints}. 
Intuitively, under such constraints, if a matching $Y$ is feasible, then it remains feasible when weakly fewer students are assigned to each college. 
In other words, removing students from any subset of colleges cannot render a feasible matching infeasible.

To define hereditary constraints formally, we first introduce a way to compare load vectors.  
We use the coordinatewise order on $\mathbb{Z}_+^m$: for $\nu, \nu' \in \mathbb{Z}_+^m$, we write $\nu \le \nu'$ if $\nu_i \le \nu'_i$ for all $i$, and $\nu < \nu'$ if $\nu \le \nu'$ and $\nu_i < \nu'_i$ for some $i$.  

\begin{definition}[Hereditary Constraints]
A set of load vectors $F \subseteq \mathbb{Z}_+^m$ is \emph{hereditary} if 
\begin{enumerate}
    \item $\mathbf{0} \in F$ and,
    \item  for all $\nu, \nu' \in \mathbb{Z}_+^m$, $\nu \in F$ and $\nu > \nu'$ imply $\nu' \in F$.  
\end{enumerate}

A feasibility function $f$ is \emph{hereditary} if its induced set 
$F = \{\nu \in \mathbb{Z}_+^m : f(\nu) = 0\}$ is hereditary.
\end{definition}

\begin{example}[Hereditary Feasibility with Regional Quota]
\label{example:hereditary}
Consider two colleges $c_1$ and $c_2$ with a regional quota of $4$ students in total.  
Let $\nu = (\nu_1, \nu_2) \in \mathbb{Z}_+^2$ denote a load vector, where $\nu_i$ is the number of students assigned to college $c_i$.  

The feasibility function $f$ is defined as
\[
f(\nu) =
\begin{cases}
0, & \text{if } \nu_1 + \nu_2 \le 4,\\
-\infty, & \text{otherwise}.
\end{cases}
\]

The induced set of feasible vectors is
\[
F = \{\nu \in \mathbb{Z}_+^2 \mid \nu_1 + \nu_2 \le 4\}.
\]

This set is \emph{hereditary}, since if a load vector $\nu \in F$, then any vector $\nu' \le \nu$ (coordinatewise) is also in $F$.  
Figure~\ref{fig:hereditary} illustrates the feasible region: the blue-shaded area represents all feasible load vectors, and the red point $\nu = (2,2)$ is an example of a feasible assignment. Any vector below and to the left of it is also feasible, while vectors outside the shaded region are infeasible.
\end{example}
\begin{figure}[tb]
    \centering
\begin{tikzpicture}[scale=1]
\draw [->,thick] (0,0) -- (5,0) node [right] {$c_1$};
\draw [->,thick] (0,0) -- (0,5) node [above] {$c_2$};
\node [below left] at (0,0) {$0$};

\foreach \i in {0,...,4} {
    \foreach \j in {0,...,4} {
        \fill[black] (\i,\j) circle (0.05);
    }
}

\fill[blue!30,opacity=0.5]
(0,0) -- (0,4) -- (1,3) -- (2,2) -- (3,1) -- (4,0) -- cycle;

\fill[red] (2,2) circle (0.1) node[above right] {$\nu$};

\foreach \i in {1,...,4} \draw (\i,0) node[below] {\i};
\foreach \j in {1,...,4} \draw (0,\j) node[left] {\j};

\end{tikzpicture}
\caption{Illustration of hereditary constraints in Example~\ref{example:hereditary}. The shaded region represents feasible load vectors satisfying $\nu_1 + \nu_2 \le 4$. If $\nu$ (red point) is feasible, then any vector weakly below and to the left of $\nu$ is also feasible.}
\label{fig:hereditary}
\end{figure}

\subsection{Connection with Other Distributional Constraints}

In this subsection, we present several examples of constraints that can be expressed as hereditary constraints. These examples illustrate that the hereditary framework encompasses a range of existing and well-studied models in the matching and allocation literature.

\paragraph{Diversity in College Choice with Type-Specific Quotas.}
Many college districts impose diversity requirements on the composition of students admitted to each college. A standard modeling approach is to impose type-specific quotas, which limit the number of students of each type that a college may admit \citep{AbSo03b,EHYY14a,AzSu25a}. Such constraints are hereditary: if a set of admitted students satisfies all type-specific quotas, then any subset of that set also satisfies the quotas. Hence, diversity constraints in college choice naturally fall within the class of hereditary constraints.

\paragraph{Refugee Resettlement with Multidimensional Constraints.}
Refugee resettlement is one of the most pressing social challenges worldwide. In these problems, authorities must assign refugee families to localities, subject to the requirement that all members of a family be placed together. In addition, different families require varying types and amounts of services, such as job training, education, healthcare, and housing, while each locality has limited resources across multiple dimensions. Because family sizes and service needs differ, these feasibility constraints cannot, in general, be captured by a single capacity constraint. Such requirements are commonly modeled as multidimensional constraints and have been studied by \citet{DKT23a,ACGS18a}. Importantly, multidimensional constraints are hereditary: if an assignment is feasible with respect to all resource dimensions, then any subset of that assignment remains feasible.

\paragraph{Daycare Matching with Teacher–Child Ratio Constraints.}
In many countries, the assignment of daycare seats for young children is organized by local governments through centralized matching mechanisms. Daycare centers are subject to legal regulations governing teacher–child ratios. Because these ratios vary by the age of the children, the resulting feasibility constraints cannot be represented by simple capacity limits. For example, Japanese national regulations \citep{KaKo23a,STM+23a} require at least one teacher for every three children aged 0, one teacher for every six children aged 1 or 2, one teacher for every twenty children aged 3, and one teacher for every thirty children aged 4 or 5. These age-dependent staffing requirements are hereditary: if an assignment satisfies all teacher–child ratio constraints, then any subset of that assignment also satisfies them.

\paragraph{Matching with Common Quotas.}
Matching with common quotas has been studied in the context of Hungarian college admissions, where individual colleges have upper quotas and certain bounded subsets of colleges are subject to common quotas that are smaller than the sum of their individual quotas \citep{BFIM10a}. Similarly, in the Japanese residency matching program, the government introduced regional caps that restrict, for each of the 47 prefectures, the total number of residents matched within the prefecture. The common quotas model is hereditary, as feasibility is preserved under the removal of any subset of matched agents.

\paragraph{Project Matching with Budget Constraints.}
This model captures the assignment of student applicants to projects proposed by supervisors in internship or research programs. A distinctive feature is that assigning a student to a project requires a contribution from the supervisor’s limited budget or time. One motivating application is summer undergraduate research programs in Australia, where students undertake research projects supervised by one or more faculty members \citep{ABB24a}. Individual supervisors may offer multiple projects; however, even when projects are partially subsidized by institutional funding, supervisors are typically required to contribute from their own research budgets or allocate limited supervisory time. These supervisor-side constraints imply that not all projects can be supported simultaneously, and feasibility therefore cannot be characterized by simple capacity constraints. Related models of matching with budget constraints are studied by \citet{KaIw17a,KaIw18a,IHZ+19a}.

\subsection{Fundamental Properties}

In this subsection, we recall several fundamental properties from matching theory.  
One of the central concepts is \emph{stability}, introduced by \citet{GaSh62a}.  
A matching is said to be stable if it admits no \emph{blocking pair}, that is, no pair of agents who would both prefer to be matched with each other rather than remain with their current assignments.  

Blocking pairs can be classified into two types: those arising from \emph{justified envy}, where a student could displace another student with lower priority at a college, and those arising from a student’s ability to \emph{claim an empty seat} by occupying a vacant position.  
The elimination of the former corresponds to fairness, while the elimination of the latter corresponds to non-wastefulness.  
Consistent with the classical matching literature, the conjunction of fairness and non-wastefulness is equivalent to stability.  
We now adapt these notions to our framework with hereditary constraints.

\begin{definition}[Justified Envy and Fairness] 
\label{def:envy}
In a matching $Y$, a student $s$ has \emph{justified envy} toward another student $s'$ if there exists a college $c$ such that:
\begin{enumerate}
    \item $(s, c) \succ_s Y_s$, i.e., student $s$ strictly prefers college $c$ over their current assignment $Y_s$;
    \item $(s',c) \in Y$ and $s \succ_c s'$, i.e., college $c$ admits student $s'$ even though it ranks $s$ higher than $s'$.
\end{enumerate}
A matching $Y$ is \emph{fair} if no student has justified envy.
\end{definition}

Equivalently, a matching is fair if, whenever a student is not assigned to a college they prefer, all students assigned to that college are ranked strictly above them in the college’s preference ordering.

\begin{definition}[Claiming an Empty Seat and Non-wastefulness (NW)]
\label{def:empty_seat}
Given a matching $Y$, a student $s$ is said to \emph{claim an empty seat} at a college $c$ if  
\begin{enumerate}
    \item $(s, c) \succ_s Y_s$, i.e., $s$ strictly prefers $c$ over their current assignment $Y_s$; and  
    \item the matching $Y' = (Y \setminus Y_s) \cup \{(s,c)\}$ is feasible.
\end{enumerate}
A matching $Y$ is \emph{non-wasteful} if no student claims an empty seat.
\end{definition}

Intuitively, a student $s$ claims an empty seat at a college $c$ if they prefer $c$ over their current assignment and could be reassigned to $c$ without violating feasibility.  
Non-wastefulness ensures that there are no ``wasted opportunities'': for every student $s$ and every college $c$ that $s$ prefers over $Y_s$, reassigning $s$ to $c$ would necessarily violate the feasibility constraints.

\begin{proposition}
\label{prop:nonexistence}
\citep{KaKo17b}
Under hereditary constraints, a matching that is both fair and non-wasteful need not exist.
\end{proposition}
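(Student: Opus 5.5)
The plan is to exhibit a small explicit instance with a hereditary feasibility function and check by exhaustion that no feasible matching is both fair and non-wasteful. The constraint should force colleges to compete for shared capacity. Students' preferences and colleges' priorities should then create a cycle in which every feasible matching either leaves a claimable empty seat or induces justified envy.

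Concrete choice: two colleges $c_1,c_2$ with individual capacities $q_{c_1}=q_{c_2}=2$ and a feasible set that is not a simple box. I would take
\[
F=\{(0,0),(1,0),(0,1),(2,0),(0,2),(1,1)\},
\]
that is, $\nu_1+\nu_2\le 2$, but with the point $(1,1)$ removed. Removing $(1,1)$ is legitimate only if $F$ stays hereditary, so I must check downward closure. Removing $(1,1)$ keeps downward closure only if nothing above it is present, and $(2,0)$ and $(0,2)$ are not above it. So take
\[
F=\{(0,0),(1,0),(0,1),(2,0),(0,2)\},
\]
which is hereditary: each college alone may take up to two students, but the two colleges cannot both be used.

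Students and priorities:
\begin{itemize}
\item students $s_1,s_2,s_3$;
\item $s_1$ prefers $c_1\succ c_2$; $s_2$ and $s_3$ prefer $c_2\succ c_1$;
\item $c_1$ ranks $s_2\succ s_3\succ s_1$;
\item $c_2$ ranks $s_1\succ s_2\succ s_3$.
\end{itemize}

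Case analysis over feasible matchings. Every feasible matching uses only one college, with at most two students.
\begin{itemize}
\item \emph{Only $c_1$ used, or nobody assigned.} If some student $s\in\{s_2,s_3\}$ is not at $c_2$ and $c_2$ is empty, then $s$ cannot move to $c_2$ while $c_1$ is nonempty. So non-wastefulness constrains the matching mainly through $c_1$'s spare seat. Each subcase reduces to checking whether an unassigned student can join $c_1$ (giving waste) or is envious (giving unfairness).
\item \emph{Only $c_2$ used.} This case is symmetric.
\end{itemize}

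The main obstacle is tuning the preferences and priorities so that every subcase fails. I would first try to derive the failures systematically; if a subcase survives, adjust the priorities. A fallback is a cleaner instance: capacities $1$ each with $F=\{(0,0),(1,0),(0,1)\}$, which is hereditary.
\begin{itemize}
\item Students $a,b$: $a$ prefers $c_1\succ c_2$ and $b$ prefers $c_2\succ c_1$.
\item Colleges: $c_1$ ranks $b\succ a$ and $c_2$ ranks $a\succ b$.
\end{itemize}
Check:
\begin{itemize}
\item The empty matching is wasteful.
\item $\{(a,c_1)\}$: $b$ envies $a$, since $b\succ_{c_1}a$ and $b$ is unassigned, so $b$ prefers $c_1$.
\item $\{(b,c_1)\}$: $a$ envies $b$? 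No, because $c_1$ ranks $b$ above $a$. But $b$ moving to $c_2$ gives $(0,1)\in F$, so $b$ claims a seat at $c_2$.
\item $\{(a,c_2)\}$: $a$ moving to $c_1$ is feasible, so this is wasteful.
\item $\{(b,c_2)\}$: $a$ envies $b$ at $c_2$.
\end{itemize}
All five feasible matchings fail, so this simpler instance completes the proof.
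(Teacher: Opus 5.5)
Your fallback instance is the paper's own counterexample (Example~\ref{example:nonexistence}, with $a=s_1$, $b=s_2$ and the regional cap $\nu_1+\nu_2\le 1$), and your check of all feasible matchings matches the paper's, so your proof is correct and follows the same route; you also check the empty matching explicitly, which the paper leaves implicit. You should delete the three-student attempt, because it is not a counterexample: there $\{(s_1,c_2),(s_2,c_2)\}$ is fair and non-wasteful, since every possible move creates one of the infeasible loads $(1,1)$, $(0,3)$ or $(1,2)$.
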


\begin{proof}
We prove Proposition~\ref{prop:nonexistence} via the following counterexample.

\begin{example}
\label{example:nonexistence}
Consider two students $S = \{s_1, s_2\}$ and two colleges $C = \{c_1, c_2\}$.  
The preferences and priorities are:
\[
c_1 \succ_{s_1} c_2, 
\quad c_2 \succ_{s_2} c_1,
\quad 
s_2 \succ_{c_1} s_1, \quad 
s_1 \succ_{c_2} s_2
\]

Suppose the set of feasible allocations is
\[
F = \{(\nu_1,\nu_2) \in \mathbb{Z}_+^2 \mid \nu_1 + \nu_2 \le 1 \},
\] 
which imposes a \emph{regional cap}: at most one student can be matched in total.  
The four possible non-empty matchings are:
\begin{itemize}
    \item $Y_1 = \{(s_1, c_1)\}$: not fair, since $s_2$ envies $s_1$ at $c_1$;
    \item $Y_2 = \{(s_1, c_2)\}$: $s_1$ can claim an empty seat at $c_1$;
    \item $Y_3 = \{(s_2, c_1)\}$: $s_2$ can claim an empty seat at $c_2$;
    \item $Y_4 = \{(s_2, c_2)\}$: not fair, since $s_1$ envies $s_2$ at $c_2$.
\end{itemize}

Thus, no matching is simultaneously fair and non-wasteful.
\end{example}

This completes the proof of Proposition~\ref{prop:nonexistence}.
\end{proof}



\section{Weakening Fairness and Non-wastefulness Simultaneously}
Since fairness and non-wastefulness may be incompatible under hereditary constraints, a natural approach is to consider relaxed variants of both properties.
In this section, we introduce two notions that simultaneously weaken fairness and non-wastefulness, each parameterized by an integer $k$, and show that these relaxed concepts are always mutually compatible.  
Our framework strictly generalizes existing results; in particular, it encompasses the existence theorem of \citet{ABB24a} as the special case $k = 0$ for both fairness and non-wastefulness.

\subsection{Envy-Received up to $k$ Peers (ER-$k$)}
Instead of requiring the complete absence of justified envy, recent research has proposed relaxed notions of fairness that bound the extent of justified envy. 
One such notion is \emph{envy-freeness up to $k$ peers (EF-$k$)}, which requires that each student envies at most $k$ other students \citep{CKL+24a}.

\begin{definition}[Envy-Freeness up to $k$ Peers (EF-$k$)] 
\label{EF-k}  
Given a feasible matching $Y$ and a student $s$, define  
\[
Ev(Y,s) = \{\, s' \in S \mid \exists c \in C : (s',c) \in Y,\; c \succ_{s} Y_{s},\; s \succ_c s' \,\},  
\]
that is, the set of students whom $s$ envies under $Y$.  
A feasible matching $Y$ satisfies \emph{EF-$k$} if, for every student $s \in S$, we have $|Ev(Y,s)| \leq k$.
\end{definition}

In contrast, we adopt a different relaxation that reverses the perspective. Rather than limiting the number of students a given student envies, we bound the amount of envy that each student receives. Specifically, a matching satisfies \emph{envy-received up to k peers} (ER-$k$) if no student is envied by more than k other students.

\begin{definition}[Envy-Received up to $k$ Peers (ER-$k$)]  
\label{ER-$k$}  
Given a feasible matching $Y$ and a student $s$, define  
\[
Evd(Y,s) = \{\, s' \in S \mid \exists c \in C : (s,c) \in Y,\; c \succ_{s'} Y_{s'},\; s' \succ_c s \,\},  
\]  
that is, the set of students who envy $s$ under $Y$.  
A feasible matching $Y$ satisfies \emph{ER-$k$} if, for every student $s \in S$, we have $|Evd(Y,s)| \leq k$.
\end{definition}

From another perspective, ER-$k$ ensures that for each college $c$, the total number of students who have justified envy toward some student assigned to $c$ is at most $k$, thereby bounding the intensity of envy within each college. Moreover, any feasible matching trivially satisfies ER-$(n-1)$, where $n$ denotes the number of students.

By definition, if a matching satisfies ER-$k$, then it also satisfies ER-$\ell$ for any $\ell > k$: if no agent is envied by more than $k$ agents, then in particular no agent is envied by more than $\ell$ agents.

\begin{proposition}  
\label{prop:monotone:ERk}  
For any integers $\ell > k$, every feasible matching $Y$ that satisfies ER-$k$ also satisfies ER-$\ell$.
\end{proposition}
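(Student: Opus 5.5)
The claim follows directly from Definition~\ref{ER-$k$}, so the plan is a single unwinding of that definition combined with transitivity of $\le$ on the integers. Fix integers $\ell > k$ and a feasible matching $Y$ that satisfies ER-$k$. The key observation is that the envied-by set $Evd(Y,s)$ depends only on $Y$, the preferences $\succ_S$, and the priorities $\succ_C$, and never on the parameter $k$. Changing the parameter from $k$ to $\ell$ therefore changes only the numerical threshold the sets are compared against, not the sets themselves.

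The steps are as follows. First, take an arbitrary student $s \in S$. Since $Y$ satisfies ER-$k$, we have $|Evd(Y,s)| \le k$. Since $k < \ell$, this gives $|Evd(Y,s)| \le k < \ell$, hence $|Evd(Y,s)| \le \ell$. As $s$ was arbitrary, the bound holds for every student.

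Second, ER-$\ell$ is defined only for feasible matchings, so we also need $Y$ to be feasible. This holds by hypothesis, and it is the same feasibility condition that appears in ER-$k$. It follows that $Y$ satisfies ER-$\ell$.

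No step presents a real obstacle. The only point to check is the one already used above: that the definition of $Evd$ is independent of the index, which is clear by inspection of Definition~\ref{ER-$k$}. The statement is simply monotonicity of a family of upper bounds on a fixed quantity. This also explains the downward-pointing arrows from ER-$0$ to ER-$k$ to ER-$|S|$ in Figure~\ref{fig:relationship}.
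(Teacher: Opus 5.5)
Your proposal is correct and is essentially the paper's own argument. The paper justifies this in one line before the statement: no student is envied by more than $k$ others, and since $k < \ell$, no student is envied by more than $\ell$ others, which is exactly your bound $|Evd(Y,s)| \le k < \ell$.
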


Both EF-$k$ and ER-$k$ reduce to fairness when $k=0$, but they represent different relaxations for larger $k$.  

\begin{proposition}  
\label{prop:connection:EFk&ERk}  
For $k>0$, EF-$k$ does not imply ER-$k$, and ER-$k$ does not imply EF-$k$. 
\end{proposition}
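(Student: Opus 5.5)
We prove both non-implications by explicit counterexamples. Since ER-$k$ and EF-$k$ only refer to justified envy, we may take the feasibility constraint to be simple maximum quotas (which are hereditary), so that every matching we write down is obviously feasible. Fix an arbitrary $k>0$; each construction uses $k+2$ students, and its envy structure is exactly a ``star'' of size $k+1$.

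\textbf{EF-$k$ does not imply ER-$k$.} We want one student envied by $k+1$ others, each of whom envies nobody else. Take students $s_0,s_1,\dots,s_{k+1}$ and two colleges $c$ and $d$, with $q_c=1$ and $q_d=k+1$. Let every $s_i$ with $i\ge 1$ rank $c\succ d$, and let $s_0$ find only $c$ acceptable. Let $c$ rank $s_1\succ\dots\succ s_{k+1}\succ s_0$. Consider
\[
Y=\{(s_0,c)\}\cup\{(s_i,d) : 1\le i\le k+1\}.
\]
Each $s_i$ with $i\ge1$ prefers $c$ and has higher priority than $s_0$ there, so $|Evd(Y,s_0)|=k+1$ and ER-$k$ fails. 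To secure EF-$k$, we choose $d$'s priority arbitrarily. No student prefers $d$ to their current assignment: the students placed at $d$ are already there, and $s_0$ finds $d$ unacceptable. Hence the only envy is toward $s_0$, and each student satisfies $|Ev(Y,s)|\le 1\le k$.

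\textbf{ER-$k$ does not imply EF-$k$.} Dually, we want one student who envies $k+1$ distinct students, each envied only by that student. Take a student $s_0$ and students $t_1,\dots,t_{k+1}$, together with colleges $c_1,\dots,c_{k+1}$ of quota $1$ each. Let $t_i$ find only $c_i$ acceptable. Let $s_0$ find all $c_i$ acceptable and prefer them to being unassigned, in any order. Let each $c_i$ rank $s_0\succ t_i$. Consider
\[
Y=\{(t_i,c_i) : 1\le i\le k+1\},
\]
with $s_0$ unassigned. Then $Ev(Y,s_0)=\{t_1,\dots,t_{k+1}\}$, so EF-$k$ fails. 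Each $t_i$ is envied only by $s_0$, because every other $t_j$ finds $c_i$ unacceptable, so $|Evd(Y,t_i)|=1\le k$. Nobody envies $s_0$, since $s_0$ is unmatched. Hence ER-$k$ holds.

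The only delicate step is checking that no stray envy pairs arise, which would raise the counts and break EF-$k$ in the first example or ER-$k$ in the second. Restricting acceptable sets, as is permitted by removing contracts from $X$, rules these out. In both examples the envy counts that must stay small are at most $1$, so the requirement $k>0$ is exactly what makes the constructions work. This is consistent with the fact that for $k=0$ both notions coincide with fairness.
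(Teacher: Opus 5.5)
Your proof is correct for every $k>0$, and it reaches the result by a somewhat different construction from the paper's. In your first instance $Evd(Y,s_0)=\{s_1,\dots,s_{k+1}\}$ while every $|Ev(Y,s)|\le 1$. In your second, $Ev(Y,s_0)=\{t_1,\dots,t_{k+1}\}$, while $|Evd(Y,t_i)|=1$ and $Evd(Y,s_0)=\emptyset$. Removing contracts from $X$ is allowed by the model, and it legitimately eliminates all stray envy pairs.

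The paper uses the same star-shaped envy idea but differs in two respects. First, it treats only $k=1$ explicitly. Second, it uses a single instance with two matchings: three students and one college $c_1$ with two seats and priority $s_1\succ s_2\succ s_3$.
\begin{itemize}
\item In $\{(s_2,c_1),(s_3,c_1)\}$ the unmatched top student envies both others, giving ER-$1$ but not EF-$1$.
\item In $\{(s_3,c_1)\}$ the bottom student is envied by both others, giving EF-$1$ but not ER-$1$.
\end{itemize}
With a single college, stray envy cannot arise: matched students already hold the only college, and unmatched students can envy only matched ones. So no restriction of $X$ is needed.

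Your version covers the full quantifier ``for every $k>0$'', which the paper's proof leaves implicit. The paper's version is more economical. Its instance generalizes immediately to one college with $k+1$ seats and $k+2$ students. That would let you drop the auxiliary college $d$ in your first example, since leaving $s_1,\dots,s_{k+1}$ unmatched works equally well. It would also let you replace your $k+1$ unit-quota colleges in the second example by a single college.
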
  

\begin{proof}  
Consider a market with three students $S = \{s_1, s_2, s_3\}$ and one college $C = \{c_1\}$ with two seats.  
The college’s preference ordering is $\succ_{c_1}: s_1 \succ s_2 \succ s_3$.  

Take $k = 1$. First, consider the matching 
\[
Y_1 = \{(s_2, c_1), (s_3, c_1)\}.
\]  
Here, both $s_2$ and $s_3$ are envied by the unmatched student $s_1$.  
Since each student is envied by at most one other student, $Y_1$ satisfies ER-$1$.  
However, $s_1$ envies both $s_2$ and $s_3$, so EF-$1$ is violated.  
Hence, ER-$k$ does not imply EF-$k$.  

Conversely, consider the matching 
\[
Y_2 = \{(s_3, c_1)\}.
\]  
Now, both $s_1$ and $s_2$ envy $s_3$, so $s_3$ is envied by two students, violating ER-$1$.  
However, each of $s_1$ and $s_2$ envies only one student ($s_3$), so EF-$1$ is satisfied.  
Hence, EF-$k$ does not imply ER-$k$.  

This completes the proof of Proposition~\ref{prop:connection:EFk&ERk}.  
\end{proof}

\subsection{Non-Wastefulness up to $k$ Objections 
(NW-$k$)}

We next introduce a relaxed version of the non-wastefulness concept. Recall that a feasible matching is non-wasteful if no student can claim an empty seat at a college they prefer. Weakening non-wastefulness inevitably permits some claims on empty seats to be considered legitimate. To ensure compatibility with ER-$k$, we allow such a claim to be dismissed if fulfilling it would violate the ER-$k$ condition.

To this end, we first define the notion of an \emph{objection} to a student’s claim on an empty seat, which forms the basis of our new concept.

\begin{definition}[Objection to a Claim on an Empty Seat]
\label{def:objection}
Consider a feasible matching $Y$ in which a student $s$ claims an empty seat at college $c$.  
A student $s'$ has an \emph{objection} to $s$’s claim on the empty seat at $c$ if the following conditions hold:  
\begin{enumerate}
    \item $s' \succ_c s$, i.e., $s'$ has higher priority than $s$ at college $c$;  
    \item $c \succ_{s'} Y_{s'}$, i.e., $s'$ prefers $c$ over their current assignment $Y_{s'}$;  
    \item $(Y \setminus Y_{s'}) \cup \{(s',c)\}$ is infeasible, i.e., $s'$ cannot be reassigned to $c$ without violating feasibility.  
\end{enumerate}
\end{definition}

Intuitively, if student $s$ claims an empty seat at college $c$, an objection from student $s'$ indicates that assigning $s$ to $c$ would cause $s'$ to have justified envy toward $s$, as illustrated in the following example.

\begin{example}[Illustration of Objection]
\label{example:objection}
Revisit Example~\ref{example:nonexistence} in the proof of Proposition~\ref{prop:nonexistence}, where two students compete for two colleges under a regional quota of 1, so that only one student can be admitted.

Consider the matching $Y_2 = \{(s_1, c_2)\}$, in which student $s_1$ claims an empty seat at the more preferred college $c_1$. However, $s_2$ has higher priority than $s_1$ at $c_1$ ($s_2 \succ_{c_1} s_1$), and assigning $s_2$ to $c_1$ would violate feasibility due to the regional cap of 1.  
Thus, if student $s_1$ were assigned to $c_1$ instead of $c_2$, student $s_2$ would have justified envy toward $s_1$.  For this reason, we say that the unmatched student $s_2$ \emph{objects} to $s_1$’s claim of the empty seat at $c_1$.
\end{example}

Building on the notion of objection, we now introduce a relaxed form of non-wastefulness that specifies the minimum number of potential objections to an empty-seat claim, parameterized by an integer $k$. Intuitively, it allows an empty seat to remain unfilled as long as admitting a student into that seat would trigger sufficiently many objections.

\begin{definition}[Non-wastefulness up to $k$ Objections (NW-$k$)]
\label{def:NW_k}
Given a feasible matching $Y$, a claim by a student $s$ to an empty seat at a college $c$ is called \emph{$k$-legitimate} if at most $k$ students object to this claim.

A feasible matching $Y$ satisfies \emph{non-wastefulness up to $k$ objections} (NW-$k$) if there exists no $k$-legitimate claim.
\end{definition}

In words, a feasible matching $Y$ is NW-$k$ if, whenever a student wishes to occupy an empty seat at a college, assigning the student to that seat would trigger more than $k$ objections. Conversely, if filling an empty seat would generate at most $k$ objections, then leaving the seat unfilled is considered wasteful, and such a matching violates NW-$k$.

The parameter $k \in \mathbb{N}$ controls the tolerance for objections. Given a matching $Y$, a student’s claim to an empty seat may face anywhere from $0$ to $|S|-1$ objections. A \emph{$k$-legitimate} claim therefore measures how difficult it is to block the claim: larger values of $k$ require more objections to justify leaving a seat empty.

As $k$ increases, the NW-$k$ requirement becomes stronger, because a student’s claim to an empty seat can be blocked only if it triggers more than $k$ objections. A larger value of $k$ therefore allows a student to successfully claim an empty seat even in the presence of more objections, expanding the set of colleges the student can legitimately claim. When $k$ $=$ $0$, NW-$0$ merely requires that every unfilled seat trigger at least one objection, corresponding to a very weak notion of non-wastefulness. In other words, as $k$ grows, more claims become legitimate, increasing students’ opportunities to access preferred colleges. The following proposition formalizes this monotonic relationship.

\begin{proposition}[Monotonicity]
\label{prop:NWl}
For any integers $\ell < k$, if a feasible matching $Y$ satisfies NW-$k$, then $Y$ also satisfies NW-$\ell$.
\end{proposition}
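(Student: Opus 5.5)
The plan is to argue by contraposition, working directly from Definition~\ref{def:NW_k}. Fix integers $\ell < k$ and a feasible matching $Y$, and suppose $Y$ violates NW-$\ell$. The aim is to show that $Y$ then violates NW-$k$, which is equivalent to the statement.

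First, unpack the hypothesis. By Definition~\ref{def:NW_k}, a violation of NW-$\ell$ means there is an $\ell$-legitimate claim. So some student $s$ claims an empty seat at some college $c$ in the sense of Definition~\ref{def:empty_seat}, and the set $O(s,c)$ of students objecting to this claim satisfies $|O(s,c)| \le \ell$.

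Second, observe that the claim and its objectors do not depend on the parameter. Whether $s$ claims an empty seat at $c$ depends only on $Y$, on $\succ_s$ and on $f$. The set $O(s,c)$ of Definition~\ref{def:objection} depends only on $Y$, the priorities $\succ_c$, the student preferences and $f$. Neither involves $\ell$ or $k$. The same pair $(s,c)$ with the same set $O(s,c)$ can therefore be examined under the parameter $k$. Since $|O(s,c)| \le \ell < k$, at most $k$ students object, so the claim is $k$-legitimate. Hence $Y$ admits a $k$-legitimate claim and violates NW-$k$, which completes the contrapositive.

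There is no real obstacle. The only point needing care is the second step: we must check explicitly that legitimacy is a threshold condition on a quantity that is independent of the parameter. Once that is noted, monotonicity follows from the inequality $\ell < k$ alone.
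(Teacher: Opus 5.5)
Your proposal is correct and follows essentially the same argument as the paper: an $\ell$-legitimate claim has at most $\ell < k$ objections and is therefore $k$-legitimate, so a violation of NW-$\ell$ yields a violation of NW-$k$. The paper phrases this as a proof by contradiction rather than by contraposition. Your remark that the set of objectors does not depend on the parameter makes explicit a point the paper leaves implicit.
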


\begin{proof}
Recall that a claim is called \emph{$k$-legitimate} if it has at most $k$ objections, and $Y$ satisfies NW-$k$ precisely when there is no $k$-legitimate claim in $Y$.

Let $\ell<k$ and suppose $Y$ satisfies NW-$k$. Assume, for contradiction, that $Y$ does not satisfy NW-$\ell$. Then there exists a claim in $Y$ that is $\ell$-legitimate, i.e. a claim with at most $\ell$ objections. But since $\ell<k$, any claim with at most $\ell$ objections also has at most $k$ objections, hence it is $k$-legitimate. This contradicts the assumption that $Y$ has no $k$-legitimate claim. Therefore no $\ell$-legitimate claim exists in $Y$, and $Y$ satisfies NW-$\ell$.
\end{proof}

The monotonicity result shows that the NW-$k$ conditions form a nested hierarchy, with a stronger condition implying all weaker ones:
\[
\text{NW-}n \;\Rightarrow\; \cdots \;\Rightarrow\; \text{NW-}k \;\Rightarrow\; \text{NW-}(k-1) \;\Rightarrow\; \cdots \;\Rightarrow\; \text{NW-}1 \;\Rightarrow\; \text{NW-}0
\]

In particular, when $k = n$, where $n$ denotes the total number of students, NW-$n$ coincides with the standard notion of non-wastefulness (NW), as shown in the following proposition.

\begin{proposition}[NW-$|S|$]
\label{prop:NWn&NW}
A feasible matching $Y$ satisfies NW-$|S|$ if and only if $Y$ satisfies NW.
\end{proposition}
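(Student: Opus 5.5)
The plan is to prove both directions directly from Definition~\ref{def:empty_seat} and Definition~\ref{def:NW_k}. The key observation is that the number of objections to any claim is at most $|S|-1$. Indeed, by Definition~\ref{def:objection}, every objector $s'$ is a student different from the claimant $s$: condition~1 requires $s' \succ_c s$, and the priority order is strict, so $s' \neq s$. Hence every claim on an empty seat has at most $|S|-1 \le |S|$ objections, which means every claim is $|S|$-legitimate.

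\emph{($\Leftarrow$)} Suppose $Y$ satisfies NW. Then no student claims an empty seat at any college. In particular, there is no claim at all, so no claim is $|S|$-legitimate, and $Y$ satisfies NW-$|S|$ vacuously. This direction needs nothing about objections.

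\emph{($\Rightarrow$)} I will argue by contraposition. Suppose $Y$ is feasible but violates NW. Then some student $s$ claims an empty seat at some college $c$: we have $(s,c) \succ_s Y_s$ and $(Y\setminus Y_s)\cup\{(s,c)\}$ is feasible. The set of objectors to this claim is contained in $S\setminus\{s\}$, so it has size at most $|S|-1 \le |S|$. Therefore the claim is $|S|$-legitimate, and $Y$ violates NW-$|S|$.

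The only step needing care is the bound on objections, namely that $s$ cannot object to its own claim. This follows from strictness of $\succ_c$ in condition~1 of Definition~\ref{def:objection}. It is also consistent with the paper's remark that a claim faces between $0$ and $|S|-1$ objections. The same argument in fact shows that NW-$(|S|-1)$ already coincides with NW, matching the caption of Figure~\ref{fig:relationship}. Otherwise the proof is routine unwinding of definitions.
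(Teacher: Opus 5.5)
Your proof is correct and follows essentially the same route as the paper. Both directions unwind the definitions, and the key point is that the claimant cannot object to her own claim, so there are at most $|S|-1$ objections and every claim is $|S|$-legitimate. Your contrapositive phrasing of the forward direction, the explicit appeal to strictness of $\succ_c$, and the remark that NW-$(|S|-1)$ already coincides with NW are minor refinements of the same argument.
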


\begin{proof}
Recall that a claim on an empty seat is called  $k$-legitimate if at most $k$ students have an objection to it, and a feasible matching satisfies NW-$k$ if no claim in $Y$ is $k$-legitimate.

We first show that NW-$|S|$ implies NW. 
Suppose $Y$ satisfies NW-$|S|$. Consider  any student $s$ who could claim an empty seat at some college $c$. By definition, the claim would be $|S|$-legitimate if at most $|S|$ students object. However, for any claim by $s$, the maximum number of students who could object is at most $|S|-1$, because $s$ cannot object to their own claim. Therefore, \emph{any potential claim would automatically be $|S|$-legitimate}. Since $Y$ satisfies NW-$|S|$, no $|S|$-legitimate claim exists in $Y$. Consequently, no student can claim an empty seat, and $Y$ satisfies the original NW.

We next show that NW implies NW-$|S|$. Conversely, suppose $Y$ satisfies NW. By definition, no student claims an empty seat. Hence, trivially, no claim in $Y$ is $|S|$-legitimate, so $Y$ also satisfies NW-$|S|$.

Thus, a feasible matching satisfies NW-$|S|$ if and only if it satisfies NW.
\end{proof}

At the other extreme, NW-$0$ coincides with the notion of \emph{cut-off non-wastefulness} (CNW) introduced by \citet{ABB24a}.
The three conditions of objection in Definition~\ref{def:objection} capture exactly the situation in which a higher-priority student would have justified envy if a claim on an empty seat were granted. These conditions were implicitly used in \citet{ABB24a}; here, we make them explicit as \emph{objections} and extend the idea from a single objecting student to a threshold of $k$ objections.
Using our terminology, CNW can be restated as follows:

\begin{definition}[Cut-off Non-Wastefulness (CNW) \citep{ABB24a}]
\label{def:cutoff_non_waste}
A matching $Y$ is \emph{cut-off non-wasteful} if either $Y$ is non-wasteful, or whenever a student $s$ claims an empty seat at a college $c$, there exists a student $s'$ who has an objection to the claim. 
\end{definition}

In other words, cut-off non-wastefulness allows a student $s$ to claim an empty seat at a college $c$ only if doing so does not generate justified envy for any higher-ranked student $s'$ whose reassignment to $c$ would be infeasible.


\begin{proposition}[NW-$0$]
\label{prop:NW0&CNW}
A feasible matching $Y$ satisfies NW-$0$ if and only if $Y$ satisfies CNW.
\end{proposition}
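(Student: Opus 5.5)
The argument is a direct unfolding of definitions, proved as two implications in the same style as Proposition~\ref{prop:NWn&NW}. Throughout, Definition~\ref{def:NW_k} is instantiated at $k=0$. A claim is $0$-legitimate exactly when at most $0$ students object to it, that is, when \emph{no} student has an objection in the sense of Definition~\ref{def:objection}. Hence $Y$ satisfies NW-$0$ if and only if every claim on an empty seat in $Y$ has at least one objecting student $s'$.

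For the direction NW-$0$ $\Rightarrow$ CNW, suppose $Y$ satisfies NW-$0$. If no student claims an empty seat, then $Y$ is non-wasteful, and the first disjunct of Definition~\ref{def:cutoff_non_waste} holds. Otherwise, take any claim by $s$ at $c$. Since it is not $0$-legitimate, at least one student $s'$ objects to it, which is precisely the second disjunct. In either case $Y$ is CNW.

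For the direction CNW $\Rightarrow$ NW-$0$, suppose $Y$ is CNW and, for contradiction, that some claim by $s$ at $c$ is $0$-legitimate, so no student objects to it. Because a claim exists, $Y$ is not non-wasteful, so the first disjunct of CNW fails. The second disjunct then requires that this claim, like every claim, has an objecting student $s'$. That contradicts the assumption that no one objects. Hence no $0$-legitimate claim exists, and $Y$ satisfies NW-$0$.

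No step here is a real obstacle. The only point needing care is the reading of the disjunction in Definition~\ref{def:cutoff_non_waste}: when $Y$ is non-wasteful, the universally quantified second clause holds vacuously, so the two disjuncts together amount to ``every claim has an objector.'' It also matters that the objection conditions used by \citet{ABB24a} coincide with the three conditions in Definition~\ref{def:objection}, which the paper takes as the restatement of CNW. With both points granted, the equivalence is immediate.
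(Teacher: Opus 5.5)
Your proposal is correct and takes essentially the same approach as the paper: both reduce NW-$0$ to ``every claim on an empty seat has at least one objector'' and read the CNW definition as saying exactly the same thing. The only difference is presentation. You split the equivalence into two implications and state explicitly that the non-wasteful disjunct is absorbed by the vacuously true second clause, whereas the paper's one-line chain of equivalences leaves this point implicit.
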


\begin{proof}
Under NW-$0$, a claim on an empty seat is $0$-legitimate if and only if it faces no objections. Therefore, $Y$ satisfies NW-$0$ if and only if there exists no claim on an empty seat that admits zero objections.

By Definition~\ref{def:cutoff_non_waste}, $Y$ satisfies CNW if and only if every claim on an empty seat is objected to by at least one student. Thus, the absence of a claim with zero objections is exactly equivalent to cutoff non-wastefulness.

Hence, $Y$ satisfies NW-$0$ if and only if $Y$ satisfies CNW.
\end{proof}

\subsection{Compatibility between ER-$k$ and NW-$k$}

We next present our first main theoretical contribution, which states the compatibility between \emph{ER-$k$} and \emph{NW-$k$}. This theorem demonstrates that fairness and non-wastefulness, when relaxed in a parallel and balanced way through the same parameter $k$, remain jointly attainable. It shows that the two properties are not in conflict but can be satisfied simultaneously across the entire range of relaxations.

\begin{theorem}
\label{thm:existence:ER&NWk}
Under hereditary constraints, for every integer $k$ with $0 \leq k \leq n$, there always exists a matching that is both ER-$k$ and NW-$k$.
\end{theorem}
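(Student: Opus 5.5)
We prove Theorem~\ref{thm:existence:ER&NWk} constructively with a generalized cutoff procedure in the spirit of \citet{ABB24a}. The procedure starts from the empty matching, which is feasible because $\mathbf{0}\in F$, and maintains a feasible matching $Y$ satisfying ER-$k$. It repeatedly looks for a $k$-legitimate claim. Suppose student $s$ claims an empty seat at $c$ and at most $k$ students object. We then reassign $s$ to $c$, setting $Y \leftarrow (Y\setminus Y_s)\cup\{(s,c)\}$, which is feasible by the definition of a claim. To make termination and the invariant manageable, we process claims in a fixed order: among all $k$-legitimate claims we choose one by a canonical rule, for instance the claim whose student has the highest priority at the claimed college among its claimants. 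When no $k$-legitimate claim remains, the current matching is NW-$k$ by Definition~\ref{def:NW_k}.

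\textbf{Termination.} Each step moves some student $s$ to a strictly preferred contract and never worsens any other student; other students lose seats only through hereditary removal, and this procedure never removes them. Each student can improve at most $|C|$ times, so there are at most $|S||C|$ steps, which gives polynomial time.

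\textbf{Invariant (the main obstacle).} We must show that ER-$k$ is preserved. After $s$ moves to $c$, the only students who can newly envy someone are:
\begin{enumerate}
\item students who now envy $s$ at $c$;
\item students who now envy the occupants of $s$'s former college $c'$.
\end{enumerate}
For $s$ itself, the envy set $Evd(Y',s)$ consists of students $s'$ with $s'\succ_c s$ and $c\succ_{s'}Y_{s'}$.
\begin{itemize}
\item Those $s'$ for whom $(Y\setminus Y_{s'})\cup\{(s',c)\}$ is infeasible are exactly the objectors, and there are at most $k$ of them.
\item The others could themselves have claimed $c$ without violating feasibility. Here the canonical ordering is used: processing higher-priority claimants first should ensure that such an $s'$ does not remain with a feasible claim ranked above $s$. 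If it did, $s'$'s claim would have at most as many objectors as $s$'s (every objector to $s'$, being above $s'$, is also above $s$ and also objects to $s$), so it would be $k$-legitimate and would have been selected first.
\end{itemize}
Envy toward $s$ must also stay bounded in later rounds, as matchings change and feasibility shrinks for others. This is the delicate part. The plan is to strengthen the invariant to the statement that every envier of an assigned student $t$ at $c$ is a student whose move to $c$ is infeasible, and to show that the number of such students ranked above $t$ is at most $k$. Moreover, this count can only decrease, because those students only improve and thereby stop envying.

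For the vacated seat at $c'$, students who envied occupants of $c'$ do not gain new envy, since envy depends only on priorities and assignments and $s$'s departure does not create new envy toward others. Students who newly envy $s'$ via $s$'s old preference are not an issue either, because $s$ improved.

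\textbf{Fallback.} If the monotone strengthening fails because later moves make previously feasible moves infeasible and thereby create new enviers, we switch to a cutoff formulation. We maintain a cutoff student $t_c$ for each college and admit to $c$ exactly the students whose best college with an eligible cutoff is $c$. A cutoff is lowered only while at most $k$ higher-priority students would be blocked. ER-$k$ then follows because every envier of a student at $c$ lies strictly above the $k$ guard positions defined by the cutoff.
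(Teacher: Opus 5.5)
Your main line is correct, and it is essentially the dual of the paper's argument.

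\textbf{The paper's route.} The paper loops on pairs $(s,c)$ where $s$ claims a seat at $c$ \emph{and} $(Y\setminus Y_s)\cup\{(s,c)\}$ is ER-$k$. So ER-$k$ holds by fiat, and the work moves into the NW-$k$ step at termination, which the paper states in one line. Strictly, that step needs exactly your lemma. A $k$-legitimate claim need not preserve ER-$k$: higher-priority students who could themselves move to $c$ feasibly envy $s$ without being objectors. One must therefore pass to the highest-priority claimant at $c$, whose objectors are a subset of $s$'s and whose enviers after the move are exactly its objectors.

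\textbf{Your route.} You loop on $k$-legitimate claims with that canonical choice. NW-$k$ at termination is then definitional, and the same lemma is used explicitly to bound the envy toward the mover. The paper's formulation is the one its cutoff and deferred-acceptance algorithms implement, since $k$-admissibility is an ER-$k$ test. Yours makes the structural fact behind the theorem visible.

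\textbf{One correction.} The ``delicate part'' is not delicate. $Evd(Y,t)$ depends only on priorities and assignments, and every step weakly improves every student. Hence $Evd(Y,t)$ can only shrink while $t$ stays put, and your remark that the count can only decrease already closes the invariant.

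The strengthened invariant you propose (every envier's move to $c$ is infeasible) is false in general. Under a regional cap, a later move of another student out of the region can make an existing envier's move to $c$ feasible. It is not needed, though. The fallback rests on the mistaken premise that feasibility changes can create enviers, so you can drop both.
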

\begin{proof}
We construct such a matching via the following iterative algorithm, described in Algorithm~\ref{alg:existence_ER_NWk}.

\begin{algorithm}[tb]
\caption{Construction of an ER-$k$ and NW-$k$ Matching}
\label{alg:existence_ER_NWk}
\KwIn{Students $S$, colleges $C$, feasibility function $f$, integer $k$}
\KwOut{A matching $Y$ that is both ER-$k$ and NW-$k$}

Initialize $Y \gets \emptyset$\;

\While{there exists a student--college pair $(s,c)$ satisfying the conditions below}{
    \begin{enumerate}
        \item $s$ can claim an empty seat at $c$;
        \item the updated matching $Y' = (Y \setminus Y_s) \cup \{(s,c)\}$ is ER-$k$.
    \end{enumerate}
    Update $Y \gets Y'$.
}

\Return $Y$\;
\end{algorithm}

\noindent
\textbf{Correctness.}  
We start from the empty matching, which is trivially ER-$k$ (though not necessarily NW-$k$).  
At each iteration, we add a pair $(s,c)$ only if the resulting matching remains ER-$k$.  
When no such admissible pair exists, the algorithm terminates.  
By construction, the final matching $Y$ satisfies:
\begin{enumerate}
    \item \textbf{ER-$k$:} every update preserves ER-$k$ by condition (2);
    \item \textbf{NW-$k$:} if any unassigned student could claim an empty seat without violating ER-$k$, the loop would continue, contradicting termination.
\end{enumerate}

\noindent
\textbf{Termination.}  
Throughout Algorithm~\ref{alg:existence_ER_NWk}, no student is ever rejected by a college, but some may move to a more preferred one.  
Since each student’s preference list is finite and every iteration strictly improves one student’s assignment, the algorithm must terminate after a finite number of steps.  
The resulting matching is therefore both ER-$k$ and NW-$k$.
\end{proof}

Our result strictly generalizes the existence theorem of \citet{ABB24a}, who show that fairness and cut-off non-wastefulness are compatible.
Their theorem corresponds to the special case $k = 0$ in our framework.

\begin{proposition}[\citet{ABB24a}]
\label{prop:fair&CNW}
Under hereditary constraints, there exists a matching that is both fair and cut-off non-wasteful, i.e., ER-0 and NW-0. 
\end{proposition}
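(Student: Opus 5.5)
Proposition~\ref{prop:fair&CNW} follows directly as the special case $k=0$ of Theorem~\ref{thm:existence:ER&NWk}. The plan is to invoke that theorem with $k=0$ and then translate the two resulting properties.

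First, apply Theorem~\ref{thm:existence:ER&NWk} with $k=0$. Since $0 \le 0 \le n$, it yields a feasible matching $Y$ that is both ER-$0$ and NW-$0$.

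Second, show that ER-$0$ is exactly fairness. Under ER-$0$, $|Evd(Y,s)|=0$ for every $s \in S$, so no student is envied by anyone. Suppose some student $s'$ had justified envy toward $s$ at a college $c$ in the sense of Definition~\ref{def:envy}, that is, $(s,c)\in Y$, $c \succ_{s'} Y_{s'}$ and $s' \succ_c s$. Then $s' \in Evd(Y,s)$, which is a contradiction. The converse direction is immediate by the same unfolding of definitions, so ER-$0$ coincides with fairness.

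Third, obtain CNW. By Proposition~\ref{prop:NW0&CNW}, NW-$0$ is equivalent to cut-off non-wastefulness, so $Y$ is CNW.

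Since the argument is a direct specialization, there is no substantial obstacle. The only point needing care is that Theorem~\ref{thm:existence:ER&NWk} itself holds at $k=0$. Its constructive argument relies on the empty matching being ER-$0$, which holds trivially because nobody is matched and hence nobody is envied. It also relies on hereditary feasibility, which guarantees that the empty matching is feasible ($\mathbf{0}\in F$). Both conditions hold uniformly in $k$, so the specialization is sound.
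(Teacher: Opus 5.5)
Your proposal is correct and follows the paper's own route: the paper presents this proposition as the $k=0$ case of Theorem~\ref{thm:existence:ER&NWk}, with ER-$0$ read as fairness and NW-$0$ identified with CNW via Proposition~\ref{prop:NW0&CNW}. One caveat on your closing remark: the delicate step in that theorem is not the empty starting matching but the passage from ``no claim can be granted without breaking ER-$0$'' to ``every claim has an objector'' (an envier who could itself feasibly move to $c$ is not an objector), and this passage holds because granting the claim of the highest-priority claimant $s^*$ at $c$ creates some envier $s' \succ_c s^*$, who cannot feasibly move to $c$ and therefore objects to every claim at $c$.
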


We next establish the tightness of Theorem~\ref{thm:existence:ER&NWk}. Specifically, we show that under hereditary constraints, no allocation can simultaneously satisfy ER-$k$ and NW-$(k+1)$, nor ER-$(k-1)$ and NW-$k$. Note that ER-k becomes more stringent as k decreases, whereas NW-k becomes more stringent as k increases.

\begin{theorem}
\label{thm:incompatibility-ER-$k$-NW-k+1}
Under hereditary constraints, for any $0 \leq k \leq |S|-2$, ER-$k$ and NW-$(k+1)$ are incompatible. Likewise, ER-$(k-1)$ and NW-$k$ are incompatible.
\end{theorem}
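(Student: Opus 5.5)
We prove the first claim by exhibiting, for each $k$, an instance in which no feasible matching is both ER-$k$ and NW-$(k+1)$. The second claim is the first one with $k$ replaced by $k-1$, so it follows immediately for every $k\ge 1$; the case $k=0$ is vacuous. The construction generalizes Example~\ref{example:nonexistence}.

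\textbf{Construction.} Let $n=k+2$, so $n\ge 2$ and the bound $k\le |S|-2$ holds with equality. Take students $s_1,\dots,s_n$, colleges $c_1,\dots,c_n$, and make every contract acceptable to both sides. Impose the regional cap $F=\{\nu\in\mathbb{Z}_+^n \mid \sum_i \nu_i\le 1\}$, which is clearly hereditary. Student $s_i$ ranks $c_i$ first; the rest of her list is arbitrary. College $c_i$ ranks $s_i$ last among all $n$ students; the rest of its order is arbitrary.

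\textbf{Ruling out the empty matching.} In $Y=\emptyset$, any student $s$ can claim an empty seat at any college $c$. A student $s'$ objects only if moving $s'$ to $c$ is infeasible. But $\{(s',c)\}$ is feasible, so there are no objections. Hence the claim is $0$-legitimate, and therefore $(k+1)$-legitimate, which violates NW-$(k+1)$.

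\textbf{Matchings with one student.} Otherwise $Y=\{(s,c)\}$ for some student $s$ and college $c$.

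First consider ER-$k$. Every other student is unmatched and finds $c$ acceptable. So $Evd(Y,s)$ is exactly the set of students ranked above $s$ at $c$. If $c=c_j$ and $s=s_j$, then $s_j$ is last at $c_j$, so $s$ is envied by $n-1=k+1$ students and ER-$k$ fails. Thus ER-$k$ forces $s=s_i$ and $c\neq c_i$.

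Now check NW-$(k+1)$ in that case. Student $s_i$ strictly prefers $c_i$, and $\{(s_i,c_i)\}$ is feasible, so $s_i$ claims an empty seat at $c_i$. The possible objectors are the other $n-1=k+1$ students. So the number of objections is at most $k+1$, and the claim is $(k+1)$-legitimate. Hence NW-$(k+1)$ fails.

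Together these cases cover every feasible matching, so ER-$k$ and NW-$(k+1)$ are incompatible in this instance.

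\textbf{Main obstacle.} The delicate part is designing the cyclic structure in which each student's top college ranks that student last. It is what makes the ER-$k$ condition and the NW-$(k+1)$ condition exclude complementary matchings. Two colleges are not enough, since each college has only one last-ranked student; hence the choice of $n=k+2$ colleges. The remaining verifications — that students other than the matched one cannot claim seats because of the cap, and the objection count — are routine.
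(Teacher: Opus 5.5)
Your argument is correct, and it uses essentially the paper's idea: a single-seat regional cap in which each student's top college ranks her last. ER-$k$ then excludes every matching $\{(s_i,c_i)\}$, and NW-$(k+1)$ excludes the empty matching and every $\{(s_i,c)\}$ with $c\neq c_i$. Your treatment of the empty matching and of the reindexed second claim is also fine.

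The difference is one of scope. You fix $|S|=k+2$, which is exactly what lets the single count $|S|-1=k+1$ exceed the ER threshold and meet the NW threshold at once. The paper instead uses fully cyclic preferences and priorities for an arbitrary number of students. There, $s_1$ matched to $c_j$ is envied by exactly $|S|-j$ students, and her claim at $c_{j'}$ draws exactly $|S|-j'$ objections. Hence ER-$k$ forces $j\ge |S|-k$, while NW-$\ell$ forces $j\le |S|-\ell$. This yields incompatibility for every $k\le|S|-2$ in a market of any prescribed size.

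Your freedom to choose ``the rest of the lists arbitrarily'' does not survive when $|S|>k+2$. Take $|S|=4$ and $k=0$, and suppose $s_1$ ranks $c_2$ second and $c_2$ ranks $s_1$ first. Then $\{(s_1,c_2)\}$ is both ER-$0$ and NW-$1$, because $s_1$'s only claim (at $c_1$) draws $3$ objections.

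So if the theorem is read with $|S|$ given, as the bound $k\le|S|-2$ suggests and as the paper's proof delivers, you need one more step. A cheap fix is to add $|S|-k-2$ students ranked at the bottom of every college's priority. Any matching that uses one of them violates ER-$k$, since all $k+2$ original students envy that student. The envy and objection counts involving the original students are unchanged, because the added students never have priority over them. Alternatively, adopt the paper's cyclic orders directly.
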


\begin{proof}
Consider a scenario with $|S|$ students and $|S|$ colleges. 
For each student $s_i$, her preference list is:
\[
c_{i} \succ_{s_i} c_{i+1} \succ_{s_i} \ldots 
\succ_{s_i} c_{|S|} \succ_{s_i} c_1 
\succ_{s_i} \ldots \succ_{s_i} c_{i-1},
\]
where the student prefers colleges in a cyclic manner, with college $c_{i}$ being the most preferred, and $c_{i-1}$ being the least preferred.

For each college $c_i$, its preference list is:
\[
s_{i+1} \succ_{c_i} s_{i+2} \succ_{c_i} \ldots
\succ_{c_i} s_{|S|} \succ_{c_i} s_1 
\succ_{c_i} \ldots \succ_{c_i} s_{i},
\]
with the college's preferences arranged cyclically, where $s_{i+1}$ is the most preferred student and $s_{i}$ is the least preferred.

In summary, for each student $s_i$, her most preferred college $c_{i}$ considers her as the least preferred, while her least preferred college $c_{i-1}$ considers her as the most preferred.

The distributional constraint $f$ is defined as:
\[
f(\nu) = 0 \quad \text{iff} \quad \sum_{c_i \in C} |\nu_i| \leq 1,
\]
which means that the total number of students assigned to all colleges cannot exceed one. Clearly, $f$ induces a regional constraint, which is hereditary.

Now, a feasible matching $Y$ satisfies ER-$k$ if and only if 
(i) $Y$ is an empty matching, or (ii) there exists a student $s$ and a college $c$ such that $c$ is ranked $k+1$-th from the bottom (equivalently, $|S|-k$-th from the top) or worse in $s$'s preference list, and $Y = \{(s, c)\}$.
Indeed, suppose otherwise that a student, say $s_1$, is matched in $Y$ to college $c_{|S|-k-1}$, which is ranked $|S|-k-1$-th in $s_1$'s preference list.
Then, $s_1$ receives justified envies from students $s_{{|S|-k,|S|-k+1,\dots, s_{|S|}}}$, resulting in  $|Evd(Y,s_1)|=k+1$. Consequently, the matching $Y$ does not satisfy ER-$k$.

On the other hand, 
a feasible matching $Y$ satisfies NW-$\ell$ if and only if there exists a student $s$ and a college $c$ such that $s$ is ranked at most $\ell+1$-th in $c$'s preference list (equivalently, $c$ is ranked at most $|S| - \ell$-th in $s$'s preference list), and $Y = \{(s, c)\}$.
Indeed, suppose that a student, say $s_1$, is matched in $Y$ to college $c_{|S|-\ell+1}$, which is ranked $|S|-\ell+1$-th in $s_1$'s preference list.
In this case, $s_1$ claims an empty seat at college $c_{|S|-\ell}$, and this claim is $\ell$-legitimate, since moving $s_1$ to $c_{|S|-\ell}$ induces only $\ell$ justified envies towards $s_1$.
Hence, the matching $Y$ does not satisfy NW-$\ell$.

From the above, for a matching $Y$ to satisfy both ER-$k$ and NW-$\ell$, there must exist a student who is matched with a college that is ranked at least $|S| - k$-th (i.e., no better than the $|S| - k$-th) in their preference list, and at most $|S| - \ell$-th (i.e., among the top $|S| - \ell$) in their preference list. Therefore, it must hold that $|S| - k \le |S| - \ell$, which implies $\ell \le k$. Hence, ER-$k$ and NW-$(k+1)$ are incompatible.
\end{proof}

\section{$k$-Admissible Cutoff Algorithm}
In this section, we describe an efficient procedure for computing a matching that satisfies both ER-$k$ and NW-$k$.
As established in Theorem~\ref{thm:existence:ER&NWk}, such a matching can be obtained by iteratively extending an empty matching while maintaining ER-$k$ until the NW-$k$ condition is met. However, the theorem does not specify how these updates should be carried out, and multiple algorithmic implementations are possible.

We next introduce the \emph{$k$-admissible cutoff algorithm}, a concrete and computationally efficient realization of this process.
As the name suggests, it generalizes the original cutoff algorithm of \citet{ABB24a} by incorporating \emph{$k$-admissibility} constraints; when $k = 0$, it coincides exactly with their procedure.
In the following section, we present an alternative formulation based on the classic college-proposing deferred acceptance algorithm, offering a complementary perspective on the same underlying principles.

\subsection{Original Cutoff Algorithm}  
For completeness and comparison, we first present the original cutoff algorithm of \citet{ABB24a}.

Each college $c$ assigns a score to every student $s$ according to its priority order: specifically, if $s$ is ranked $k$-th by $c$, then $s$ receives a score of $|S| - k + 1$.
Given a cutoff score vector $B \in \{1, 2, \dots, |S|+1 \}^C$, a student $s$ is said to be \emph{admissible} to college $c$ if her score at $c$ is at least $B_c$.
For any cutoff vector $B$, let $Y^B$ denote the \emph{induced matching} obtained by assigning each student to her most-preferred admissible college (if any).

The algorithm proceeds as follows. Initially, all cutoff scores are set to $|S|+1$, and every student is unmatched.
Then, iteratively, the cutoff score of some college $c$ is decreased by one whenever the resulting induced matching remains feasible.
The process terminates when the cutoff vector $B$ is \emph{minimal}, meaning that decreasing any cutoff score further would make the induced matching infeasible.

\subsection{$k$-Admissibility and New Cutoff Notion}

Our new algorithm resembles the original cutoff algorithm but departs from it in a crucial way: students ranked below the cutoff may still be admitted, provided that their admission does not violate the ER-$k$ condition.

The key innovation lies in refining the notion of admissibility when evaluating potential reassignments. To formalize this refinement, we introduce two new concepts, beginning with the notion of \emph{$k$-admissibility}, which generalizes the previous admissibility concept used in the original cutoff algorithm.

\begin{definition}[$k$-Admissibility]
\label{def:kadmissibility}
Given a feasible matching $Y$ and an integer $k$, a student $s$ is said to be \emph{$k$-admissible} to a college $c$ (with respect to $Y$) if there exists a reassignment $Y' = (Y \setminus Y_s) \cup \{(s, c)\}$ such that:
\begin{enumerate}
    \item matching $Y'$ is feasible; and
    \item at most $k$ students envy $s$ under $Y'$, that is, there are at most $k$ students who both (a) prefer $c$ to their current assignment in $Y'$, and (b) have higher priority at $c$ than $s$.
\end{enumerate}
\end{definition}

In words, a student $s$ is $k$-admissible to a college $c$ if assigning $s$ to $c$ would preserve feasibility and generate no more than $k$ instances of justified envy. In other words, a student $s$ is $k$-admissible to a college $c$ if $s$ has a $k$-legitimate claim for an empty seat of $c$. When $k = 0$, this definition reduces to the standard notion of admissibility used in the original cutoff algorithm.

We next introduce the concept of a \emph{cutoff student}, which corresponds to the \emph{cutoff score} at each college. This notion generalizes, and importantly differs from, the one used in the original cutoff algorithm. Intuitively, the cutoff student at a college $c$ is the lowest-priority student $s$ such that i) every student ranked above or at $s$ in $c$'s priority order is assigned to a college they weakly prefer to $c$, and ii) the next student immediately below $s$, if any, is assigned to a college strictly worse than $c$. Notably, the cutoff student $s$ need not be matched to $c$ itself, but only to a college that is weakly better than $c$. 

As in the original cutoff algorithm, each college $c$ assigns a score to every student $s$ according to its priority ordering: if $s$ is ranked $k$-th by $c$, then $s$ receives a score of $|S| - k + 1$. For the score $|S| + 1$, we imagine a dummy student who has the highest possible priority.
  
\begin{definition}[Cutoff Student and Cutoff Score]
\label{def:cutoff_student}
Given a feasible matching $Y$ and a college $c$, a student $s$ is the \textbf{cutoff student} for $c$ with respect to $Y$ if the following conditions hold:
\begin{enumerate}
  \item 
  every student with weakly higher priority than $s$ at $c$ is assigned to a college they weakly prefer to $c$, i.e.,
  \[
    \forall t \in S, \; t \succeq_c s \Rightarrow Y_t \succeq_t c.
  \]
  \item 
  Let $s'$ denote the student ranked immediately below $s$ in $c$’s priority order (if such a student exists), i.e., $s \succ_c s'$ and there is no $t$ such that $s \succ_c t \succ_c s'$.  
  Then $s'$ is assigned to a college strictly worse than $c$, i.e.,
  \[
    c \succ_{s'} Y_{s'}.
  \] 
\end{enumerate}

Given the cutoff student $s$ at college $c$, the \textbf{cutoff score} of $c$, denoted by $B_c$, is defined as
\[
B_c = |S| - \text{rank}_c(s) + 1,
\] 
where $\text{rank}_c(s)$ denotes the position of $s$ in $c$’s priority ordering, with the highest-priority student assigned rank $1$.
\end{definition}

Unlike in the original cutoff algorithm, where each cutoff serves as a \emph{strict admission threshold} that prevents any student ranked below it from being admitted, our cutoff score is \emph{non-binding}. A student ranked below $B_c$ may still be assigned to $c$ provided that doing so preserves ER-$k$. In particular, the student ranked immediately below the cutoff student (if any) is currently assigned to an option strictly worse than $c$ (including $\emptyset$, representing being unmatched).

The cutoff $B_c$ therefore marks the boundary between students who are effectively satisfied relative to $c$ and those who might benefit from moving to $c$. More precisely, any student ranked at or above $B_c$ does not claim an empty seat at $c$, whereas a student can claim an empty seat at $c$ only if they lie below the cutoff score $B_c$. As we show shortly, $B_c$ serves as a \emph{reference point} that guides the search for $k$-admissible students, rather than as a hard constraint on admission.

\begin{example}[Illustration of Cutoff Student and Cutoff Score]
\label{example:cutoff_student}
Consider an instance with students $S = \{s_1, s_2, s_3\}$ and colleges $C = \{c_1, c_2\}$. Each college has capacity one, and the total number of matched students is at most two with no further feasibility constraints.
The students’ preferences and colleges’ priority orders are:
\begin{align*}
    s_1: &\; c_2 \succ c_1,  & c_1: &\; s_1 \succ s_2 \succ s_3,\\
    s_2: &\; c_1 \succ c_2,  & c_2: &\; s_3 \succ s_2 \succ s_1,\\
    s_3: &\; c_2 \succ c_1.
\end{align*}

Suppose the current matching is $Y = \{(s_1, c_2), (s_2, c_1)\}$ and $k = 1$.  The cutoff student at $c_1$ is $s_2$, since all students ranked weakly above $s_2$ are assigned to colleges they weakly prefer to $c_1$, while the next student below $s_2$, namely $s_3$, is unmatched which is worse than $c_1$.  
For $c_2$, no student satisfies the cutoff conditions: the highest-priority student $s_3$ is not assigned to a college weakly better than $c_2$, so the cutoff student can be regarded as an \emph{imaginary student} with cutoff score $|S| + 1$.  
Although $s_1$, who has lower priority than $s_3$, is assigned to $c_2$, the matching $Y$ remains ER-$1$ because only $s_3$ has justified envy toward $s_1$.
\end{example}

We next present a procedure for updating a college's cutoff score, described in Algorithm~\ref{alg:update_cutoff}. 
Given a matching $Y$, a college $c$, and its current cutoff $B_c$,  
the algorithm examines the students ranked strictly below the current cutoff in descending order of priority. 
For each such student $s$, it checks whether $s$ strictly prefers college $c$ to her current assignment $Y_s$. 
If the student does \emph{not} strictly prefer $c$, then the cutoff score is decreased by one and the algorithm proceeds to the next student. 
If the student \emph{does} strictly prefer $c$, then the search terminates and the algorithm returns the current cutoff score. 
Finally, if the scan reaches the end of the priority list without encountering any student who strictly prefers $c$, the cutoff is automatically updated to $1$.

\begin{algorithm}[tb]
\caption{Update Cutoff Score}
\label{alg:update_cutoff}
\SetAlgoLined
\KwIn{Matching $Y$, college $c$, current cutoff $B_c$}
\KwOut{Updated cutoff $B_c'$}

Set $B_c' \gets B_c$\;
\ForEach{student $s$ ranked below $B_c$ in descending order of $\succ_c$}{
    \If{$s$ weakly prefers $Y(s)$ to $c$}{
        $B_c' \leftarrow B_c' -1$\;
        \If{$B_c' == 1$}{
        \Return{$B_c'$}
        }
    }
    \Else
    {
    \Return $B_{c}'$\;
    }
}
\end{algorithm}

\subsection{Algorithm Description}

We are now ready to formally present the $k$-admissible cutoff algorithm, as shown in Algorithm~\ref{alg:k_cutoff}. 
At a high level, the algorithm iteratively examines colleges one at a time and traverses each college’s priority list to identify students whose admission would result in a feasible, $k$-admissible, and individually improving matching. 
Instead of scanning each priority list from the top, the algorithm leverages the cutoff score to guide the search for such $k$-admissible students.

We now provide a detailed description of Algorithm~\ref{alg:k_cutoff}. 
The procedure begins with an empty matching $Y$ and initializes the cutoff vector $B$ so that $B_c = |S| + 1$ for every college $c \in C$. 
It also maintains a \emph{college stack} $P$, which keeps track of colleges that cannot admit any additional students without violating feasibility or $k$-admissibility under the current matching $Y$.

At each step, the algorithm selects the college $c \in C \setminus P$ with the smallest index and examines students in $c$’s priority order, starting from the student ranked immediately below its current cutoff score $B_c$ and proceeding downward. 
A student $s$ is assigned to $c$ if the following conditions hold:
\begin{enumerate}
\item $s$ prefers $c$ to her current assignment $Y_s$; and
\item $s$ is $k$-admissible to $c$ with respect to $Y$, thereby preserving ER-$k$.
\end{enumerate}

If such a student is found, the matching is updated to $Y' = (Y \setminus Y_s) \cup \{(s, c)\}$, by reassigning $s$ from $Y_s$ to $c$, and the college stack $P$ is reset to empty.
Once all students in $c$’s priority list have been examined, the cutoff score of $c$ is updated according to Algorithm~\ref{alg:update_cutoff}, and $c$ is added to the stack $P$.

The algorithm terminates when $P$ contains all colleges, i.e., when every college has been examined and no further student can be admitted without violating feasibility or ER-$k$. 
At termination, the resulting matching $Y$ satisfies both ER-$k$ and NW-$k$.

\begin{algorithm}[tb]
\caption{\textsc{$k$-Admissible Cutoff Algorithm}}
\label{alg:k_cutoff}
\SetAlgoLined
\KwIn{Set of students $S$, set of colleges $C$, feasibility function $f$, integer $k$}
\KwOut{An ER-$k$ and NW-$k$ matching $Y$}

Initialize $Y \gets \emptyset$\;
Initialize cutoff student $B_c \gets |S| + 1$ for all $c \in C$\;
Initialize college stack $P \gets \emptyset$\;

\While{$P \neq C$}{
    Select $c \in C \setminus P$ with the smallest index\;
    
    \ForEach{student $s$ in $c$’s priority order starting from $B_c - 1$ downward}{
        \If{$s$ prefers $c$ to $Y_s$ \textbf{and} $s$ is $k$-admissible to $c$ w.r.t.\ $Y$}{
            Update $Y \gets (Y \setminus Y_s) \cup \{(s, c)\}$\;
            Reset $P \gets \emptyset$\;
        }
    }
    Update $B_c$ using Algorithm~\ref{alg:update_cutoff}\;
    $P \gets P \cup \{c\}$\;
}

\Return{$Y$}\;
\end{algorithm}

\begin{theorem}
    \label{theo:k-admissible}
    Algorithm~\ref{alg:k_cutoff}, the $k$-admissible cutoff algorithm, always computes a matching that satisfies both ER-$k$ and NW-$k$. 
    Moreover, the algorithm runs in polynomial time.
\end{theorem}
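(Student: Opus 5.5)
The plan is to view Algorithm~\ref{alg:k_cutoff} as a concrete scheduling of the generic procedure in Algorithm~\ref{alg:existence_ER_NWk}. Correctness and termination then follow largely as in the proof of Theorem~\ref{thm:existence:ER&NWk}. The extra work is to show that the cutoff-guided scan never overlooks a $k$-legitimate claim, and to bound the number of iterations polynomially.

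\textbf{Step 1 (invariants).} By induction over updates, we show that $Y$ is always feasible and ER-$k$. An update reassigns $s$ to $c$ only if $s$ is $k$-admissible, so $Y'$ is feasible and $s$ is envied by at most $k$ students. For every other student $t$, the set $Evd(Y',t)$ can only shrink: only $s$ changed assignment, and $s$ strictly improved, so $s$ envies weakly fewer students than before and no new envy toward $t$ arises. Hence ER-$k$ is preserved. Each update strictly improves one student's assignment and nobody is ever made worse off. Therefore the total number of updates is at most $\sum_s |X_s| \le |S||C|$.

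\textbf{Step 2 (termination and NW-$k$).} Since $P$ is reset only on an update, between updates at most $|C|$ college scans occur. Each scan examines at most $|S|$ students, and each examination consists of one feasibility query plus an envy count costing $O(|S|)$. Algorithm~\ref{alg:update_cutoff} runs in $O(|S|)$. This gives a polynomial bound of roughly $O(|S|^3|C|^2)$ oracle calls.

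For NW-$k$, suppose that at termination some $s$ has a $k$-legitimate claim at $c$. By the remark after Definition~\ref{def:kadmissibility}, this is equivalent to $s$ being $k$-admissible to $c$ and preferring $c$ to $Y_s$. The last scan of $c$ occurred under the same final $Y$, since no update happened after $c$ entered $P$. That scan started at $B_c-1$. We must therefore show that $s$ is ranked strictly below the cutoff student, i.e. has score below $B_c$.

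\textbf{Step 3 (main obstacle: cutoff monotonicity).} The key lemma is that throughout the run, every student ranked weakly above the cutoff of $c$ is assigned to a college she weakly prefers to $c$, matching condition 1 of Definition~\ref{def:cutoff_student}. This holds when the cutoff is set by Algorithm~\ref{alg:update_cutoff}, because the scan lowers $B_c$ only past students with $Y_t\succeq_t c$. It remains true afterwards, since assignments only improve over time. Consequently any student who prefers $c$ to her current match lies strictly below $B_c$, so the scan in Algorithm~\ref{alg:k_cutoff} sees her, which closes Step 2.

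The delicate point is the ordering within a single scan. A student skipped earlier in the scan, because she was not $k$-admissible, may become admissible only after a later update; but that update resets $P$, so $c$ is rescanned. I would handle this by noting that at termination all colleges were scanned under an unchanged $Y$. I would also check the boundary case $B_c=1$ and the dummy score $|S|+1$ separately.
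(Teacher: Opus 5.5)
Your route is the paper's: an ER-$k$ invariant, NW-$k$ from the stopping rule $P=C$, and a polynomial bound from monotone improvements. Two of your ingredients are missing from the paper's proof and are genuinely needed: the argument that $Evd(\cdot,t)$ can only shrink for $t\neq s$, and the cutoff invariant of Step~3.

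However, your resolution of the ``delicate point'' is wrong, and this is a genuine gap. The paper's proof shares it when it asserts that resetting $P$ ensures all colleges are re-examined. In Algorithm~\ref{alg:k_cutoff}, an update during the scan of $c$ resets $P$, but $c$ is put into $P$ at the end of that same scan. So students examined before the last update, within the scan that contains it, are never re-examined under the final $Y$. Such a student $t$ may have been skipped only because moving her to $c$ was infeasible. That can change when a lower-priority student moves into $c$ \emph{from another college}: the move lowers one coordinate of the load while raising another, and a hereditary $F$ does not control that.

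Concretely, take the following instance with $k=1$:
\begin{itemize}
\item students $s,t$ and colleges $a$ (index 1), $b$ (index 2);
\item $F$ given by $2\nu_a+\nu_b\le 2$;
\item preferences $b\succ_s a$ and $b\succ_t a$;
\item priorities $s\succ_a t$ and $t\succ_b s$.
\end{itemize}
The run then goes as follows.
\begin{itemize}
\item The scan of $a$ admits $s$.
\item The scan of $b$ skips $t$ (load $(1,1)$ is infeasible). It then moves $s$ to $b$ (load $(0,1)$, envied only by $t$) and ends with $P=\{b\}$.
\item The rescan of $a$ changes nothing, so the output is $\{(s,b)\}$.
\end{itemize}
But $t$ can now move to $b$ (load $(0,2)$), and nobody ranks above her at $b$, so this is a $0$-legitimate claim. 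Hence the statement fails for the pseudo-code as written, and no argument can close this step. The algorithm itself must change, e.g.\ add $c$ to $P$ only after a scan of $c$ that made no update; this is what Step~5 of the running example does for $c_3$. Then every college's last scan runs entirely under the final $Y$, and your Step~3 invariant guarantees that scan reaches every student who strictly prefers $c$.

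There is a second, repairable gap. The remark after Definition~\ref{def:kadmissibility} does not give an equivalence, although the paper's proof makes the same identification. Objectors (Definition~\ref{def:objection}) are only those enviers who cannot move to $c$ themselves. For example, take $Y=\emptyset$, a capacity-one $c$, and unmatched students $s'\succ_c s$ who both want $c$. The claim of $s$ is $0$-legitimate, yet $s$ is not $0$-admissible.

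To fix this, let $s^*$ be the $\succ_c$-highest student who claims an empty seat at $c$. Any student who would envy $s^*$ after her move ranks above $s^*$ (hence above $s$) and prefers $c$. By maximality of $s^*$, such a student cannot move to $c$ feasibly. So each such student objects to the claim of $s$, and there are at most $k$ of them. Hence $s^*$ is $k$-admissible to $c$ and strictly prefers $c$. By Step~3 she lies below $B_c$, contradicting the final update-free scan of $c$.
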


\begin{proof}
We first show that Algorithm~\ref{alg:k_cutoff} satisfies ER-$k$.
The algorithm starts from the empty matching, which trivially satisfies ER-$k$.
At each iteration, it considers a college $c \in C \setminus P$ and examines students in descending priority order below the current cutoff $B_c$.
A student $s$ is reassigned to $c$ only if $s$ is $k$-admissible to $c$ with respect to the current matching $Y$.
By definition of $k$-admissibility, such a reassignment preserves ER-$k$.
Since the matching is modified only through $k$-admissible moves, ER-$k$ is maintained throughout the execution of the algorithm.
Therefore, the matching returned by Algorithm~\ref{alg:k_cutoff} satisfies ER-$k$.

\medskip
We next show that the algorithm satisfies NW-$k$.
Suppose, for contradiction, that the matching $Y$ produced by Algorithm~\ref{alg:k_cutoff} does not satisfy NW-$k$.
Then there exists a student $s$ and a college $c$ with an empty seat such that $s$ can make a $k$-legitimate claim to $c$, i.e., assigning $s$ to $c$ preserves feasibility and ER-$k$.

Recall that the algorithm maintains a stack $P$ of colleges that have been fully examined under the current matching.
A college $c$ is added to $P$ only after the algorithm has verified that no student below the cutoff can be reassigned to $c$ without violating feasibility or ER-$k$.
Moreover, whenever the matching is updated, the stack $P$ is reset to empty, ensuring that all colleges are re-examined under the updated matching.

The algorithm terminates only when $P = C$, meaning that every college has been examined and found unable to admit any additional student via a $k$-admissible reassignment.
In particular, for the college $c$ involved in the alleged $k$-legitimate claim, the algorithm must have previously determined that no such reassignment is possible.
This contradicts the existence of a $k$-legitimate claim by $s$ to $c$.
Hence, the resulting matching satisfies NW-$k$.

\medskip
Finally, we analyze the running time of the algorithm.
Throughout the execution of Algorithm~\ref{alg:k_cutoff}, students are reassigned only to colleges they weakly prefer to their current assignments.
Since the length of each student’s preference list is no greater than the number of colleges, the number of times a student can be reassigned is bounded by the number of colleges.
Moreover, each iteration performs a polynomial amount of work in the size of the input.
Therefore, Algorithm~\ref{alg:k_cutoff} terminates in polynomial time.

This completes the proof of Theorem~\ref{theo:k-admissible}.
\end{proof}

\subsection{Running Example}
We now illustrate how to compute an ER-1 and NW-1 matching using Algorithm~\ref{alg:k_cutoff}.

\begin{example}
\label{example:k_cutoff}
Consider an instance with students $S = \{s_1, s_2, s_3, s_4\}$ and colleges $C = \{c_1, c_2, c_3\}$.  
The students' preferences and the colleges' priority orders are as follows:
\begin{align*}
    s_1: &\; c_3 \succ c_1 \succ c_2,  & c_1: &\; s_1 \succ s_2 \succ s_3 \succ s_4,\\
    s_2: &\; c_2 \succ c_3 \succ c_1,  & c_2: &\; s_1 \succ s_2 \succ s_3 \succ s_4,\\
    s_3: &\; c_3 \succ c_2 \succ c_1,  & c_3: &\; s_4 \succ s_3 \succ s_2 \succ s_1,\\
    s_4: &\; c_3 \succ c_2 \succ c_1.
\end{align*}
Each college has capacity of 2, and at most three students can be matched in total.

\paragraph{Initialization.}
The algorithm begins by setting the initial cutoff score for each college $c$ to $B_c = |S| + 1 = 5$.  
The initial matching is empty, $Y := \varnothing$, and the college stack is initialized as $P := \varnothing$.  
The stack $P$ records the colleges that have been examined and cannot admit any additional students without violating feasibility or 1-admissibility.

\paragraph{Step 1.}
The algorithm first selects college $c_1$, the college with the smallest index.  
College $c_1$ begins by examining the student corresponding to $B_{c_1} - 1 = 4$, namely $s_1$.  
Since $s_1$ is unmatched and 1-admissible to $c_1$, the algorithm assigns $s_1$ to $c_1$, yielding the matching
\[
Y_1 = \{(s_1, c_1)\}.
\]

College $c_1$ next examines $s_2$, who is also unmatched and 1-admissible.  
Admitting $s_2$ produces the matching
\[
Y_2 = \{(s_1, c_1), (s_2, c_1)\}.
\]

The following student in the priority list, $s_3$, cannot be admitted because $c_1$'s capacity would be exceeded.  
The same holds for $s_4$.  
After all students below the current cutoff have been examined, no further assignments to $c_1$ are possible in this step, so $c_1$ is added to the college stack: $P = \{c_1\}$.  
The cutoff score for $c_1$ is then updated to $3$, corresponding to student $s_2$, and the cutoff vector becomes $B = (3, 5, 5)$.

\paragraph{Step 2.}
The algorithm next selects college $c_2 \in C \setminus P$.  
College $c_2$ begins by examining $s_1$ (the student corresponding to $B_{c_2}-1 = 4$).  
Student $s_1$ prefers her current assignment at $c_1$ to $c_2$, and thus no reassignment occurs.  

College $c_2$ then examines $s_2$, who prefers $c_2$ over her current assignment $c_1$ and is 1-admissible.  
Reassigning $s_2$ to $c_2$ yields the matching
\[
Y_3 = \{(s_1, c_1), (s_2, c_2)\}.
\]
Since a reassignment occurred, the college stack is reset to $P = \varnothing$.

Continuing down the priority list, $c_2$ examines $s_3$, who is unmatched and 1-admissible, and therefore assigns her to $c_2$:
\[
Y_4 = \{(s_1, c_1), (s_2, c_2), (s_3, c_2)\}.
\]

The next student, $s_4$, cannot be admitted due to the total capacity constraint.  
Thus, after completing its scan, $c_2$ is added to the college stack: $P = \{c_2\}$.  
The cutoff score of $c_2$ is then updated to $2$, and the cutoff vector becomes $B = (3, 2, 5)$.

\paragraph{Step 3.} 
Next, the algorithm revisits $c_1 \in C \setminus P$.  
College $c_1$ examines $s_3$, who is now assigned to $c_2$ and does not strictly prefer $c_1$, so no reassignment occurs.  
The next student, $s_4$, also cannot be admitted due to the total capacity constraint.  
After completing its scan, college $c_1$ is added to the stack: $P = \{c_1, c_2\}$.  

Although no new assignments were made, the cutoff score of $c_1$ is updated.  
Since $s_3$ (the student immediately below the previous cutoff) is now assigned to a college she weakly prefers to $c_1$, the cutoff decreases from $3$ to $2$.  
Thus the cutoff vector becomes $B = (2, 2, 5)$.

\paragraph{Step 4.} 
The algorithm then considers $c_3 \in C \setminus P$.  
College $c_3$ begins by examining $s_4$ (corresponding to $B_{c_3}-1 = 4$), who cannot be admitted because the total matching capacity has already been reached.  

Next, $c_3$ examines $s_3$, who strictly prefers $c_3$ to her current assignment at $c_2$ and is 1-admissible.  
Reassigning $s_3$ to $c_3$ produces the matching:
\[
Y_5 = \{(s_1, c_1), (s_2, c_2), (s_3, c_3)\}.
\]
Since a reassignment occurred, the college stack is reset to $P = \varnothing$.

College $c_3$ continues by examining $s_2$, who prefers to remain at $c_2$, and then $s_1$, who strictly prefers $c_3$ and is 1-admissible.  
Assigning $s_1$ to $c_3$ yields:
\[
Y_6 = \{(s_1, c_3), (s_2, c_2), (s_3, c_3)\}.
\]

Because $s_4$ remains unmatched and strictly prefers $c_3$, the cutoff score at $c_3$ stays at $5$, and the cutoff vector remains $B = (2, 2, 5)$.

\paragraph{Step 5.} 
Since the college stack is empty, the algorithm restarts with $c_1$.  
College $c_1$ examines the students below its cutoff but cannot admit any of them due to feasibility and 1-admissibility constraints, so it is added to the stack: $P = \{c_1\}$.  
Next, the algorithm proceeds to $c_2$, which likewise cannot admit any additional student, and so $c_2$ is added: $P = \{c_1, c_2\}$.  
Finally, $c_3$ also cannot admit any further student, and the stack becomes $P = C$.  

\paragraph{Final Matching.} 
Because all colleges are now in the stack, the algorithm terminates.  
This indicates that no college can admit any additional student without violating feasibility or 1-admissibility.  
The final ER-1 and NW-1 matching produced by the algorithm is
\[
Y_6 = \{(s_1, c_3), (s_2, c_2), (s_3, c_3)\},
\]
and the final cutoff vector is $B = (2, 2, 5)$.
\end{example}

\section{$k$-Admissible College-proposing Deferred Acceptance}
In this section, we present an alternative perspective on the $k$-Admissible Cutoff Algorithm by interpreting it as a constrained variant of the college-proposing deferred acceptance (DA) algorithm. Under this interpretation, colleges iteratively make proposals to students, but unlike in the standard DA algorithm, proposals are restricted to students who satisfy the $k$-admissibility condition.

Informally, the DA-based interpretation of the $k$-Admissible Cutoff Algorithm proceeds as follows. In each iteration, a college $c$ that is able to make a new proposal selects a student $s$ who is $k$-admissible to $c$. The student $s$ then compares the proposal from $c$ with her current assignment and tentatively accepts the option she prefers. The matching $Y$ is updated accordingly, and the set of $k$-admissible students is recomputed for the affected colleges. The algorithm terminates when no college can make any further $k$-admissible proposals.

\subsection{Formal Description}

The $k$-Admissible College-Proposing Deferred Acceptance ($k$-SPDA) algorithm closely parallels the $k$-admissible cutoff algorithm and can be interpreted as a constrained variant of the college-proposing deferred acceptance (DA) algorithm. It maintains the same core data structure, the college stack $P$, but operates explicitly through proposal and acceptance steps.

Initially, the matching $Y$ is empty.
At each iteration, the algorithm selects the college $c \in C \setminus P$ with the smallest index and allows $c$ to traverse its priority list, starting from the student ranked immediately below its cutoff $B_c$ and proceeding downward.
For each student $s$ below the cutoff, college $c$ makes a proposal to $s$ if and only if $s$ is $k$-admissible to $c$ with respect to the current matching $Y$. Upon receiving a proposal, student $s$ compares the new offer $(s,c)$ with her current tentative assignment $Y_s$ and tentatively retains the option she prefers. If $s$ accepts the proposal from $c$, the matching $Y$ is updated accordingly and the college stack $P$ is reset to empty, allowing all colleges to resume proposing under the updated matching. If $s$ rejects the proposal, college $c$ proceeds to the next student in its priority order.
If college $c$ exhausts its priority list without any successful proposal, it is added to the stack $P$, indicating that $c$ cannot improve its assignment under the current matching via a $k$-admissible proposal. As in the $k$-admissible cutoff algorithm, the procedure terminates when $P = C$, that is, when no college can make any further $k$-admissible proposals.

\begin{algorithm}[tb]
\caption{\textsc{$k$-Admissible College-Proposing Deferred Acceptance ($k$-SPDA)}}
\label{alg:k_DA}
\SetAlgoLined
\KwIn{Set of students $S$, set of colleges $C$, feasibility function $f$, integer $k$}
\KwOut{A feasible matching $Y$}

\BlankLine
Initialize the matching $Y \gets \emptyset$\;
Initialize cutoff $B_c \gets |S| + 1$ for all $c \in C$\;
Initialize college stack $P \gets \emptyset$\;

\BlankLine
\While{$P \neq C$}{
    Select the college $c \in C \setminus P$ with the smallest index\;
    
    \ForEach{student $s$ in $c$’s priority order starting from $B_c - 1$ downward}{
        \If{$s$ is $k$-admissible to $c$ w.r.t.\ $Y$}{
            \tcp{College proposal}
            College $c$ proposes to student $s$\;
            
            \If{$(s,c) \succ_s Y_s$}{
                \tcp{Student selection}
                Update the matching:
                $Y \gets (Y \setminus Y_s) \cup \{(s,c)\}$\;
                Reset $P \gets \emptyset$\;
            }
        }
    }
    
    Update $B_c$ using Algorithm~\ref{alg:update_cutoff}\;
    $P \gets P \cup \{c\}$\;
}
\Return{$Y$}\;
\end{algorithm}

\section{Experiments}

In this section, we present simulation results for the $k$-admissible cutoff algorithm and investigate how the parameter $k$ affects matching outcomes in terms of fairness and non-wastefulness.

\subsection{Dataset Generation}

We generate synthetic instances of a college choice problem with regional caps, a classical matching model that can be represented using hereditary constraints. The data-generating process is designed to reflect commonly studied features of centralized admissions systems while allowing controlled variation in the correlation of preferences, priorities and capacity heterogeneity. We next describe the instance generation procedure in detail.

\paragraph{Student Preferences}
The preferences of students are modeled as complete rankings over all colleges. Preferences are generated independently according to a Mallows model with repeated insertion \citep{LuCr11a}. Let $\succ_S^{\mathrm{ref}}$ denote a common reference ranking over colleges. For each student $s \in S$, preferences are drawn as
\[
\succ_s \sim \mathcal{M}(\succ_S^{\mathrm{ref}}, \phi_S),
\]
where $\phi_S \in (0,1]$ controls the degree of dispersion around the reference ranking. Smaller values of $\phi_S$ generate more heterogeneous preferences, while larger values induce stronger correlation across students.

\paragraph{College Priorities}
Each college $c \in C$ is assigned a priority ordering over students, independently drawn from a Mallows model,
\[
\succ_c \sim \mathcal{M}(\succ_C^{\mathrm{ref}}, \phi_C),
\]
where $\succ_C^{\mathrm{ref}}$ is a reference ranking of students and $\phi_C \in (0,1]$ governs the degree of correlation in priorities across colleges.

\paragraph{College Capacities}
We implement three methods for generating college capacities.
In the \emph{fixed-capacity} method, all colleges are assigned an identical capacity equal to
\[
q_c = \max\left\{1, \left\lfloor \rho_C \cdot |S| / |C| \right\rfloor \right\},
\]
where $\rho_C \in (0,1]$ is the target supply--demand ratio.

In the \emph{uniform-capacity} method, each college capacity is drawn independently from a discrete uniform distribution over a prespecified integer range.

In our experiments, we use the \emph{normal-capacity} method. The total capacity is first set to
\[
C_{\mathrm{tot}} = \max\left\{ |C|, \left\lfloor \rho_C \cdot |S| \right\rfloor \right\}.
\]
Let $\mu = C_{\mathrm{tot}}/|C|$ denote the mean capacity per college, and let $\sigma = \mu \times \texttt{std\_ratio}$, where \texttt{std\_ratio} defaults to $0.1$ if unspecified. Initial college capacities are drawn independently from a normal distribution $\mathcal{N}(\mu,\sigma)$, truncated below at $1$, and rounded to integers. Since rounding may change the total capacity, capacities are subsequently adjusted by incrementing or decrementing randomly selected colleges (while maintaining a minimum capacity of one) until the total capacity equals $C_{\mathrm{tot}}$.

\paragraph{Regions and Regional Capacities}
The number of regions is determined as
\[
|R| = \max\left\{1, \left\lfloor \texttt{region\_ratio} \cdot |S| \right\rfloor \right\},
\]
where \texttt{region\_ratio} is set to $0.02$ in the experiments.

Colleges are assigned to regions in a deterministic round-robin manner. Specifically, colleges are indexed arbitrarily and assigned sequentially to regions $1,2,\ldots,|R|$, cycling back to region~$1$ after region~$|R|$ until all colleges are assigned. Let $C(r) \subseteq C$ denote the set of colleges assigned to region $r \in R$.

Regional constraints are imposed as aggregate upper bounds. For each region $r$, the regional capacity is defined as a fixed fraction of the total capacity of colleges in that region:
\[
Q_r = \left\lfloor \texttt{region\_capacity\_ratio} \cdot \sum_{c \in C(r)} q_c \right\rfloor,
\]
where \texttt{region\_capacity\_ratio} is set to $0.6$ in the experiments.

A matching is feasible if and only if it respects both individual college capacity constraints and the regional capacity constraint for every region.

\subsection{Measurement}

To evaluate the effect of varying $k$, we report outcome measures that capture both fairness (ER-$k$) and non-wastefulness (NW-$k$).
For each parameter configuration, we generate 50 independent instances and report average outcome measures, as described below.

We first report ER-related measures. \emph{Average envy received} is the average number of students who envy a given student, while \emph{maximum envy received} is the largest number of students who envy any single student. The latter corresponds to the smallest $k$ for which the matching satisfies ER-$k$. Smaller values of this measure therefore indicate stronger fairness guarantees.

Next, we report a worst-case measure of non-wastefulness.
\emph{Maximum objections} records the largest number of objections that would arise if a student were to claim an empty seat at some college.
Formally, it is defined as
\[
\max_{\substack{s \in S,\; c \in C:\\ s \text{ claims an empty seat at } c}}
\left|
\left\{
s' \in S \;\middle|\; s' \text{ objects to reassigning } s \text{ to } c
\right\}
\right|.
\]

When the matching satisfies NW-$(n-1)$, there exists no pair $(s,c)$ with an empty seat that can be used to compute this maximum, since no student can claim any empty seat without violating non-wastefulness. This value corresponds to the smallest integer $k$ for which the matching satisfies NW-$k$.
Specifically, for each empty seat, we hypothetically assign a student to the corresponding college and count the number of students who would object to this assignment.
In general, this number ranges from $0$ to $|S|-1$.
A larger admissible value of $k$ indicates that students have more opportunities to claim empty seats without generating excessive objections, and therefore reflects a stronger degree of non-wastefulness.

Finally, we report aggregate measures. \emph{Total envy} counts the total number of justified envy relations in the matching, while \emph{total claims} counts the total number of empty-seat claims that would generate objections. For both measures, smaller values indicate better overall performance.

\subsection{Experimental Results}

We report experimental results in Table~\ref{tab:experiment:200-10-ktradeoff} for markets with 200 students and 10 colleges.
The parameters $\phi_s$ and $\phi_c$ denote the Mallows dispersion parameters governing student preferences and college priorities, respectively, with larger values corresponding to more correlated rankings.
For each value of $k$, the table reports ER-related measures (average and maximum envy received), the guaranteed number of objections under the NW-$k$ notion, and aggregate outcome statistics (total envy and total claims).

The table illustrates a clear trade-off between fairness and non-wastefulness as $k$ varies.
As $k$ increases, the matching allows more envy but progressively eliminates empty-seat claims, reflected in a sharp rise in the number of guaranteed objections and a decline in total claims.

Focusing on $k=10$, the results demonstrate that intermediate values of $k$ achieve a favorable balance between the two objectives.
When $\phi_s = 0.7$ and $\phi_c = 0.5$, the average envy received is $5.04$, while the maximum envy received is $10$, indicating that the outcome satisfies ER-$10$.
At the same time, the guaranteed number of objections reaches $177.64$, close to its maximum of $199$, suggesting that only very few empty-seat claims can be sustained.
Aggregate measures are also favorable: total envy remains moderate at $14.42$, and total claims are reduced to $0.52$, indicating that almost all empty seats are effectively eliminated.


A similar pattern holds when preferences are more correlated on the student side.
For $\phi_s = 0.9$ and $\phi_c = 0.5$, at $k=10$ the average envy received increases to $8.02$ and the maximum envy received remains at $10$, again satisfying ER-$10$.
The guaranteed number of objections rises sharply to $173.68$, while total claims drop to $0.34$, confirming strong non-wastefulness.
The increase in total envy reflects the higher degree of preference correlation, rather than a qualitative deterioration of fairness.

Overall, the table shows that intermediate values of $k$, such as $k=10$, nearly eliminate empty-seat claims while keeping envy uniformly bounded.
In contrast, very small values of $k$ preserve strong fairness at the cost of substantial waste, whereas very large values of $k$ yield only marginal improvements in non-wastefulness while allowing rapidly increasing envy, especially when student preferences are highly correlated.

\begin{table}[tb]
\centering
\setlength{\tabcolsep}{3pt}
\caption{200 students and 10 colleges and 4 regions}
\label{tab:experiment:200-10-ktradeoff}
\begin{tabular}{c c c | c c | c | c c}
\toprule
$\phi_s$ & $\phi_c$ & $k$ 
& avg. envy & max envy 
& maximum
& total 
& total \\
 &  &  
& received & received
& objections 
& envy 
& claims \\
\midrule


\multirow{8}{*}{0.7} & \multirow{8}{*}{0.5}
& 0   & 0.00 & 0  & 13.32 & 0.00 & 5.12 \\
& & 1   & 0.56 & 1  & 34.22 & 1.20 & 4.22 \\
& & 2   & 1.38 & 2  & 46.84 & 3.06 & 4.24 \\
& & 5   & 3.02 & 5  & 131.04 & 8.66 & 2.04 \\
& & 10  & 5.04 & 10 & 177.64 & 14.42 & 0.52 \\
& & 20  & 6.64 & 19 & 200.00 & 18.82 & 0.00 \\
& & 50  & 7.04 & 21 & 200.00 & 18.12 & 0.00 \\
& & 199 & 6.46 & 16 & 200.00 & 17.82 & 0.00 \\

\midrule

\multirow{8}{*}{0.7} & \multirow{8}{*}{0.7}
& 0   & 0.00 & 0  & 9.98  & 0.00 & 5.38 \\
& & 1   & 0.58 & 1  & 30.44 & 1.26 & 5.38 \\
& & 2   & 1.32 & 2  & 51.30 & 2.60 & 3.70 \\
& & 5   & 3.38 & 5  & 97.34 & 8.72 & 1.56 \\
& & 10  & 5.68 & 10 & 162.42 & 16.78 & 0.72 \\
& & 20  & 8.28 & 19 & 200.00 & 23.46 & 0.00 \\
& & 50  & 7.84 & 21 & 200.00 & 22.06 & 0.00 \\
& & 199 & 8.24 & 23 & 200.00 & 21.64 & 0.00 \\

\midrule

\multirow{7}{*}{0.7} & \multirow{7}{*}{0.9}
& 0   & 0.00 & 0  & 2.12  & 0.00 & 8.08 \\
& & 1   & 0.58 & 1  & 3.54  & 0.82 & 7.54 \\
& & 2   & 1.30 & 2  & 5.34  & 2.66 & 6.74 \\
& & 5   & 3.82 & 5  & 16.90 & 8.52 & 5.10 \\
& & 10  & 7.38 & 10 & 44.68 & 25.32 & 2.58 \\
& & 50  & 21.58 & 43 & 194.10 & 65.10 & 0.04 \\
& & 199 & 23.18 & 57 & 200.00 & 73.84 & 0.00 \\

\midrule


\multirow{8}{*}{0.9} & \multirow{8}{*}{0.5}
& 0   & 0.00 & 0  & 1.38  & 0.00 & 11.18 \\
& & 1   & 0.72 & 1  & 2.48  & 1.60 & 10.74 \\
& & 2   & 1.64 & 2  & 3.32  & 4.90 & 8.42 \\
& & 5   & 4.54 & 5  & 14.12 & 22.78 & 4.26 \\
& & 10  & 8.02 & 10 & 173.68 & 49.04 & 0.34 \\
& & 20  & 9.20 & 15 & 200.00 & 54.44 & 0.00 \\
& & 50  & 9.06 & 15 & 200.00 & 57.04 & 0.00 \\
& & 199 & 8.82 & 16 & 200.00 & 56.80 & 0.00 \\

\midrule

\multirow{8}{*}{0.9} & \multirow{8}{*}{0.7}
& 0   & 0.00 & 0  & 1.56  & 0.00 & 12.04 \\
& & 1   & 0.68 & 1  & 2.64  & 1.60 & 10.38 \\
& & 2   & 1.66 & 2  & 3.40  & 5.14 & 9.18 \\
& & 5   & 4.50 & 5  & 10.58 & 22.74 & 4.56 \\
& & 10  & 8.32 & 10 & 106.62 & 48.76 & 0.86 \\
& & 20  & 10.48 & 18 & 189.26 & 60.68 & 0.06 \\
& & 50  & 10.86 & 23 & 200.00 & 64.42 & 0.00 \\
& & 199 & 11.34 & 20 & 200.00 & 62.08 & 0.00 \\

\midrule

\multirow{7}{*}{0.9} & \multirow{7}{*}{0.9}
& 0   & 0.00 & 0  & 1.70  & 0.00 & 16.74 \\
& & 1   & 0.62 & 1  & 2.32  & 1.38 & 15.78 \\
& & 2   & 1.66 & 2  & 3.32  & 5.02 & 14.68 \\
& & 5   & 4.70 & 5  & 6.66  & 21.90 & 10.32 \\
& & 10  & 9.10 & 10 & 13.04 & 56.18 & 5.88 \\
& & 50  & 32.74 & 50 & 177.76 & 161.46 & 0.16 \\
& & 199 & 36.90 & 80 & 200.00 & 168.82 & 0.00 \\

\bottomrule
\end{tabular}
\end{table}

\subsection{Analysis of the Experimental Results}

We analyze how the parameter $k$ affects the outcomes produced by the $k$-admissible cutoff algorithm, focusing on  fairness (ER-$k$) and non-wastefulness (NW-$k$). The results reveal systematic and robust patterns across market sizes and preference environments.

\paragraph{Effect of $k$ on Fairness.}
Across all experimental settings, both the average and the maximum ER-$k$ measures increase as $k$ grows. For small values of $k$, this increase is relatively mild, indicating that limited relaxations of fairness constraints introduce only modest envy among students. As $k$ becomes larger, ER-$k$ rises more sharply and eventually plateaus, suggesting that once a sufficient number of fairness constraints are relaxed, additional increases in $k$ do not generate qualitatively new sources of envy. This saturation effect is consistently observed across different values of $\phi_s$ and $\phi_c$.

\paragraph{Effect of $k$ on Non-Wastefulness.}
The non-wastefulness measure NW-$k$ exhibits a strongly nonlinear response to changes in $k$. When $k$ is small, NW-$k$ remains low, indicating that most claims on empty seats can still be blocked with relatively few objections. For intermediate values of $k$, NW-$k$ increases rapidly, reflecting growing inefficiencies in seat allocation and a reduced ability to resolve justified claims. For sufficiently large $k$, NW-$k$ saturates at an upper bound determined by market size, implying that non-wastefulness constraints become effectively inactive in this regime.

\paragraph{Fairness and Non-wastefulness Trade-off.}
The joint behavior of ER-$k$ and NW-$k$ reveals a clear trade-off induced by $k$. Small increases in $k$ yield limited efficiency gains while preserving relatively strong fairness guarantees. In contrast, intermediate values of $k$ correspond to a region in which fairness deteriorates quickly while non-wastefulness is only partially preserved. Once $k$ becomes large, both fairness and non-wastefulness are substantially weakened, placing outcomes in a region that is dominated from both perspectives. This indicates that the most meaningful trade-offs occur for moderate values of $k$.

\paragraph{Interaction with Preference Correlation.}
The sensitivity of both ER-$k$ and NW-$k$ to $k$ depends critically on the preference correlation parameters $\phi_s$ and $\phi_c$. Higher values of $\phi_s$ amplify the growth of justified envy as $k$ increases, reflecting the fact that correlated student preferences intensify competition for popular schools. Similarly, higher values of $\phi_c$ accelerate the deterioration of non-wastefulness, as correlated college preferences tighten capacity constraints. As a result, identical values of $k$ can lead to substantially different outcomes across preference environments.

\section{Conclusion}
In this paper, we study two-sided matching markets under general upper bounds where fairness and non-wastefulness are incompatible in general. We introduce symmetric parametric relaxations, ER-$k$ and NW-$k$, and show that for any fixed $k$ there exists a matching satisfying both notions under hereditary constraints. We also provide polynomial-time algorithms to compute such matchings.

Our work opens several directions for future research. First, while ER-$k$ constraints admit a natural formulation using integer programming, it remains an open question how to efficiently encode NW-$k$ constraints within an integer programming framework. Second, although our results establish existence, it is unclear whether a student-optimal matching exists among all matchings satisfying both ER-$k$ and NW-$k$, analogous to classical optimality results under stability. Third, while we introduce a $k$-admissible college-proposing deferred acceptance mechanism, it is an open problem whether a student-proposing deferred acceptance algorithm can be designed to compute an ER-$k$ and NW-$k$ matching. Finally, an important direction is to investigate incentive properties in this relaxed framework, including whether meaningful forms of strategy-proofness, such as strategy-proofness with respect to truncation, can be guaranteed.

\section*{Acknowledgments}
This work was partially supported by the Japan Science and Technology Agency under the ERATO Grant Number JPMJER2301, and by the Japan Society for the Promotion of Science under Grant Numbers JP21K17700 and JP25K21277. 

\bibliographystyle{plainnat}
\bibliography{reference}  

@article{ABB24a,
  title={Cutoff stability under distributional constraints with an application to summer internship matching},
  author={H. Aziz and A. Baychkov and P. Bir{\'o}},
  journal={Mathematical Programming},
  volume={203},
  number={1},
  pages={247--269},
  year={2024},
  publisher={Springer}
}

@inproceedings{ACGS18a,
	author = {H. Aziz and J. Chen and S. Gaspers and Z. Sun},
	booktitle = {International Conference on Autonomous Agents and Multiagent Systems},
	pages = {964--972},
	title = {Stability and Pareto Optimality in Refugee Allocation Matchings},
	year = {2018}}

@article{AbSo03b,
	author = {A. Abdulkadiro{\u{g}}lu and T. S{\"o}nmez},
	journal = {American Economic Review},
	number = {3},
	pages = {729--747},
	title = {School Choice: A Mechanism Design Approach},
	volume = {93},
	year = {2003}}

@article{AyBo21a,
	author = {O. Ayg{\"u}n and I. B{\'o}},
      volume={13},
      number={3},
      pages={1--28},
      year={2021},
	journal = {American Economic Journal: Microeconomics},
	title = {College admission with multidimensional privileges: The {B}razilian affirmative action case},
	}

@article{AyTu17a,
	author = {O. Ayg{\"u}n and B. Turhan},
	journal = {American Economic Review},
	number = {5},
	pages = {210--13},
	title = {{Large-Scale Affirmative Action in School Choice: Admissions to IITs in India}},
	volume = {107},
	year = {2017}}

@article{AzSu25a,
  author       = {H. Aziz and
                  Z. Sun},
  title        = {Multi-rank smart reserves: {A} general framework for selection and
                  matching diversity goals},
  journal      = {Artif. Intell.},
  volume       = {339},
  pages        = {104274},
  year         = {2025}
}

@article{BCC+19b,
  author       = {S. Baswana and
                  P. P. Chakrabarti and
                  S. Chandran and
                  Y. Kanoria and
                  U. Patange},
  title        = {Centralized Admissions for Engineering Colleges in India},
  journal      = {{INFORMS} J. Appl. Anal.},
  volume       = {49},
  number       = {5},
  pages        = {338--354},
  year         = {2019},
    url          = {https://doi.org/10.1287/inte.2019.1007},
  doi          = {10.1287/INTE.2019.1007}
}

@article{BFIM10a,
	author = {P. Bir{\'o} and T. Fleiner and R. W. Irving and D. F. Manlove},
	issn = {0304-3975},
	journal = {Theoretical Computer Science},
	number = {34},
	pages = {3136 - 3153},
	title = {The College Admissions problem with lower and common quotas},
	volume = {411},
	year = {2010}}

@article{CKM+22a,
  title={Impossibility of weakly stable and strategy-proof mechanism},
  author={S. Cho and M. Koshimura and P. Mandal and K. Yahiro and M. Yokoo},
  journal={Economics Letters},
  volume={217},
  pages={110675},
  year={2022},
  publisher={Elsevier}
}

@inproceedings{CKL+24a,
  author       = {S. Cho and
                  K. Kimura and
                  K. Liu and
                  K. Liu and
                  Z. Liu and
                  Z. Sun and
                  K. Yahiro and
                  M. Yokoo},
  title        = {Fairness and Efficiency Trade-off in Two-sided Matching},
  booktitle    = {Proceedings of the 23rd International Conference on Autonomous Agents
                  and Multiagent Systems, {AAMAS} 2024, Auckland, New Zealand, May 6-10,
                  2024},
  pages        = {372--380},
  publisher    = {International Foundation for Autonomous Agents and Multiagent Systems
                  / {ACM}},
  year         = {2024}
}

@article{DKT23a,
  title={Matching Mechanisms for Refugee Settlement},
  author={D. Delacr{\'e}taz and S. Kominers and A. Teytelboym},
  journal={American Economic Review},
  volume={113},
  number={10},
  pages={2689--2717},
  year={2023}
}

@article{EHYY14a,
	author = {L. Ehlers and I. E. Hafalir and M. B. Yenmez and M. A. Yildirim},
	journal = {Journal of Economic Theory},
	pages = {648--683},
	publisher = {Elsevier},
	title = {School choice with controlled choice constraints: Hard bounds versus soft bounds},
	volume = {153},
	year = {2014}}

@article{GaSh62a,
	author = {D. Gale and L. S. Shapley},
	journal = {The American Mathematical Monthly},
	number = {1},
	pages = {9--15},
	publisher = {Taylor \& Francis},
	title = {College admissions and the stability of marriage},
	volume = {69},
	year = {1962}}

@inproceedings{GKK+15a,
  author       = {M. Goto and
                  F. Kojima and
                  R. Kurata and
                  A. Tamura and
                  M. Yokoo},
  title        = {Designing Matching Mechanisms under General Distributional Constraints},
  booktitle    = {Proceedings of the Sixteenth {ACM} Conference on Economics and Computation,
                  {EC} '15, Portland, OR, USA, June 15-19, 2015},
  pages        = {259--260},
  publisher    = {{ACM}},
  year         = {2015}
}

@article{IHZ+19a,
	author = {A. Ismaili and N. Hamada and Y. Zhang and T. Suzuki and M. Yokoo},
	journal = {Journal of Artificial Intelligence Research},
	pages = {393--421},
	title = {Weighted Matching Markets with Budget Constraints},
	volume = {65},
	year = {2019}
}

@inproceedings{KaIw17a,
  author       = {Y. Kawase and
                  A. Iwasaki},
  title        = {Near-Feasible Stable Matchings with Budget Constraints},
  booktitle    = {Proceedings of the Twenty-Sixth International Joint Conference on
                  Artificial Intelligence, {IJCAI} 2017, Melbourne, Australia, August
                  19-25, 2017},
  pages        = {242--248},
  year         = {2017}
}

@inproceedings{KaIw18a,
  author       = {Y. Kawase and
                  A. Iwasaki},
  title        = {Approximately Stable Matchings With Budget Constraints},
  booktitle    = {Proceedings of the Thirty-Second {AAAI} Conference on Artificial Intelligence,
                  (AAAI-18), the 30th innovative Applications of Artificial Intelligence
                  (IAAI-18), and the 8th {AAAI} Symposium on Educational Advances in
                  Artificial Intelligence (EAAI-18), New Orleans, Louisiana, USA, February
                  2-7, 2018},
  pages        = {1113--1120},
  publisher    = {{AAAI} Press},
  year         = {2018}
}

@article{KTY18a,
  author       = {F. Kojima and
                  A. Tamura and
                  M. Yokoo},
  title        = {Designing matching mechanisms under constraints: An approach from
                  discrete convex analysis},
  journal      = {J. Econ. Theory},
  volume       = {176},
  pages        = {803--833},
  year         = {2018}
}

@article{KaKo15a,
	author = {Y. Kamada and F. Kojima},
	journal = {The American Economic Review},
	number = {1},
	pages = {67--99},
	publisher = {American Economic Association},
	title = {Efficient matching under distributional constraints: Theory and applications},
	volume = {105},
	year = {2015}}

@article{KaKo17b,
	author = {Y. Kamada and F. Kojima},
  title        = {Stability concepts in matching under distributional constraints},
  journal      = {Journal of Economic Theory},
  volume       = {168},
  pages        = {107--142},
  year         = {2017}
}

@article{KaKo23a,
    author = {Kamada, Y. and Kojima, F.},
    title = "{Fair Matching under Constraints: Theory and Applications}",
    journal = {The Review of Economic Studies},
    volume = {91},
    number = {2},
    pages = {1162-1199},
    year = {2023},
    month = {04},
}

@inproceedings{LuCr11a,
	author = {T. Lu and C. Boutilier},
	booktitle = {Proceedings of the 28th International Conference on Machine Learning (ICML-11)},
	pages = {145--152},
	title = {Learning Mallows models with pairwise preferences},
	year = {2011}}

@article{Roth84a,
	author = {A. E. Roth},
	journal = {Journal of Political Economy},
	number = {6},
	pages = {991--1016},
	publisher = {The University of Chicago Press},
	title = {The evolution of the labor market for medical interns and residents: a case study in game theory},
	volume = {92},
	year = {1984}}

@inproceedings{STM+23a,
  Author = {Z. Sun and Y. Takenami and D. Moriwaki and Y. Tomita and M. Yokoo},
  booktitle = {Proceedings of the 37th AAAI Conference on Artificial Intelligence, AAAI 2023},
  Title = {Daycare Matching in {J}apan: Transfers and Siblings},
  pages = {14487-14495},
  Year = {2023}}

@article{TKK25a,
  author       = {R. Takeshima and
                  K. Kimura and
                  A. Kuroki and
                  T. Wakasugi and
                  M. Yokoo},
  title        = {A New Relaxation of Fairness in Two-Sided Matching Respecting Acquaintance
                  Relationships},
  journal      = {CoRR},
  volume       = {abs/2508.15296},
  year         = {2025},
  url          = {https://doi.org/10.48550/arXiv.2508.15296},
  doi          = {10.48550/ARXIV.2508.15296},
  eprinttype    = {arXiv}
}

@STRING{aaai = { AAAI Conference}}

@STRING{aamas = { International Conference on Autonomous Agents and Multiagent Systems (AAMAS)}}

@STRING{aaai = { AAAI Conference on Artificial Intelligence (AAAI)}}

@STRING{aamas = { AAMAS Conference}}

@STRING{ijcai = { International Joint Conference on Artificial Intelligence (IJCAI)}}

@STRING{springer = {Springer-Verlag}}

@STRING{acm = {ACM Press}}

\end{document}